\documentclass[singlecolumn,prx,longbibliography,nofootinbib, superscriptaddress]{revtex4-2}

\usepackage[T1]{fontenc}
\usepackage[a4paper,margin=1in]{geometry}
\usepackage{amsmath,amssymb,amsthm,mathtools,bm,mathrsfs}
\usepackage{physics}
\usepackage{xcolor}
\usepackage{enumitem}
\usepackage{graphicx}
\usepackage{tikz}
\usetikzlibrary{positioning}

\usepackage{times}
\usepackage[colorlinks=true,linkcolor=teal,citecolor=teal,urlcolor=teal]{hyperref}

\newtheorem{theorem}{Theorem}[section]
\newtheorem{lemma}[theorem]{Lemma}
\newtheorem{corollary}[theorem]{Corollary}
\newtheorem{proposition}[theorem]{Proposition}
\newtheorem{remark}[theorem]{Remark}
\newtheorem{example}[theorem]{Example}
\newtheorem{definition}[theorem]{Definition}

\newcommand{\E}{\mathbb{E}}
\newcommand{\one}{\mathbf{1}}

\newcommand{\trnorm}[1]{\left\lVert #1 \right\rVert_1}
\newcommand{\dist}{\operatorname{dist}}
\newcommand{\id}{\operatorname{id}}
\newcommand{\HS}[1]{\lVert #1 \rVert_2}

\begin{document}

\title{Approximation theorems for fermionic Gaussian states}


\author{Amir-Reza Negari}
\affiliation{Perimeter Institute for Theoretical Physics, Waterloo, Ontario N2L 2Y5, Canada}
 \affiliation{Institute for Quantum Computing, University of Waterloo, Waterloo, Ontario N2L 3G1, Canada}

\author{Farzin Salek}
\affiliation{Perimeter Institute for Theoretical Physics, Waterloo, Ontario N2L 2Y5, Canada}
\affiliation{Institute for Quantum Computing, University of Waterloo, Waterloo, Ontario N2L 3G1, Canada}
\affiliation{Dahlem Center for Complex Quantum Systems, Freie Universit\"at Berlin, 14195 Berlin, Germany}

\author{Zolt\'an Zimbor\'as}
\affiliation{Department of Physics, University of Helsinki, P.O. Box 43, FI-00014 Helsinki, Finland}
\affiliation{HUN-REN Wigner Research Centre for Physics, Budapest, Hungary}
\affiliation{Algorithmiq Ltd, Kanavakatu 3C, 00160 Helsinki, Finland}

\author{Aram Harrow}
\affiliation{Center for Theoretical Physics -- a Leinweber Institute, Massachusetts Institute of Technology, Cambridge, Massachusetts 02139, USA}

\author{Patrick Hayden}
\affiliation{Leinweber Institute for Theoretical Physics, Stanford University, Stanford, California 94305, USA}
\affiliation{Google DeepMind}

\author{Jens Eisert}
\affiliation{Dahlem Center for Complex Quantum Systems, Freie Universit\"at Berlin, 14195 Berlin, Germany}
\affiliation{Helmholtz-Zentrum Berlin f\"ur Materialien und Energie, 14109 Berlin, Germany}

\begin{abstract}
A recurring principle in many-body physics is that sufficiently distributed interactions or correlations lead to an effectively mean-field description. For fermionic Gaussian states, we establish a quantitative version of this principle directly at the level of Majorana covariance matrices. The main observation is that the admissibility condition for fermionic covariance matrices imposes a quantitative monogamy constraint on two-point correlations: a fixed region has only a bounded covariance budget to distribute among many disjoint regions. We use this principle in three directions.  First, for quadratic fermionic Hamiltonians on finite simple graphs of high degree, we obtain product-state approximations to both the ground-state energy and the finite-temperature free energy, with error of order \(D^{-1/2}\) for \(D\)-regular graphs.  Second, we prove a finite
fermionic Gaussian de Finetti theorem: a \(k\)-copy Gaussian state that is
\(n\)-exchangeable under graded fermionic copy permutations is
\(O(k/n)\)-close in trace norm to the product of its one-copy marginal.  In particular, infinite exchangeability within the Gaussian class implies exact product structure. Third, for spatial fermionic Gaussian
states with exponentially clustering mutual information, we prove that the
conditional mutual information across a separating buffer decays as
\(O(r^{-1})\), and hence, by recoverability, the state admits an
\(O(r^{-1/2})\) trace-norm approximation by a recovered state.  These
results show that product and de Finetti approximations and recoverability
estimates for fermionic Gaussian systems can all be derived directly from
covariance-level structure.
\end{abstract}
\maketitle


\section{Introduction}

Fermionic Gaussian states, also known as quasifree or free-fermionic
states, constitute a central class of quantum states in many-body physics.
They arise naturally as ground states and thermal states of quadratic, and
hence noninteracting, fermionic Hamiltonians.  Remarkably, their reduced
states, entropies, and correlation functions can all be efficiently determined
from one-particle data, making them one of the paradigmatic exactly tractable
classes of quantum many-body states.  In a Majorana representation, these data
are encoded in the Majorana covariance matrix, denoted throughout this work
by
\(M\); equivalently, one may use the one-particle Green operator
\(G=(\one+iM)/2\).  The covariance description is classical in the theory of
quasifree states and is also central in fermionic linear optics and in the
analysis of free lattice models
\citep{ArakiQuasifree,BravyiFLO,TerhalDiVincenzo,Peschel2003,PeschelEisler2009}.
At the same time, fermionic Gaussian states are not merely a technical toy
model.  They include physically important examples with Fermi surfaces and
characteristic entanglement behaviour, including logarithmic violations of
entropic area laws
\citep{WolfAreaViolation,GioevKlich,EisertCramerPlenio,SwingleFermiSurface}.
They therefore provide a useful setting in which to test general ideas about
correlation, locality, exchangeability, and approximation in many-body quantum
systems.

One particularly important idea is mean-field theory.  Its basic
premise is that, when the influence on a local degree of freedom is distributed
among sufficiently many other degrees of freedom, the correlations carried by
any individual interaction become weak and a low-complexity variational
description becomes accurate.  In fermionic many-body theory, Hartree--Fock
theory is the canonical realization of this premise: one optimizes over Slater
determinants, or more generally over quasifree states when pairing is allowed,
thereby replacing an interacting problem by a self-consistent one-particle
description.  The setting of the present work is narrower but also more
quantitative.  We remain within the Gaussian class and ask when high
connectivity, symmetric extendibility, or spatial separation rigorously forces
an approximation by an even simpler Gaussian state, such as a product of local
marginals.  Thus, our product-state bounds should be understood as rigorous
Gaussian mean-field statements, closely related in spirit to Hartree--Fock but
not as an analysis of the full Hartree--Fock variational problem for interacting
fermions.

Motivated by this line of thought, the guiding question of this work is
when a fermionic Gaussian state can, for a given purpose, be replaced by a
simpler state.  We consider three precise versions of this question.  The first
is variational: when, and under what conditions, are the energy and free energy
of a quadratic fermionic Hamiltonian well approximated by product states?  The
second concerns exchangeability: when does the existence of many symmetric
extensions force a Gaussian state to be close to an independent product state?
The third is spatial and information-theoretic: when does a buffer region make
two distant regions approximately conditionally independent, so that the state
is approximately recoverable from one side of the buffer?  Although these
questions are usually treated by different methods, we show that in the
fermionic Gaussian setting they are governed by a common covariance-level
mechanism.

The starting point is the admissibility condition for Majorana covariance
matrices.  Informally, not every real antisymmetric matrix can be the
covariance matrix of a fermionic state: it must satisfy a positivity condition,
equivalently its singular values must be bounded by one.  This elementary
constraint has strong consequences for the way two-point correlations can be
distributed.  If a region \(A\) is correlated with many disjoint regions
\(B_1,\ldots,B_k\), then the covariance blocks from \(A\) to these regions
cannot all be large.  We refer to this as covariance monogamy.  The main point
of {this work
is that covariance monogamy, combined either with high degree,
exchangeability, or spatial separation, leads to quantitative approximation
theorems.

The first application concerns product-state approximations for high-degree
Hamiltonians.  For general two-local qudit Hamiltonians, Brand\~ao and Harrow
proved that high degree in the interaction graph forces the ground-state energy
to be well approximated by product states
\citep{BrandaoHarrowGround,BrandaoHarrowStates}.  Their result is based on
information-theoretic monogamy and applies far beyond the fermionic setting.
For quadratic fermionic Hamiltonians, however, the energy of an edge is
determined only by the two-point covariance block across that edge.  This makes
it possible to bypass dimension-dependent many-body approximation theorems and
work directly with covariance matrices.  In this way we obtain an
\(O(D^{-1/2})\) product-state approximation for quadratic Hamiltonians on
finite simple \(D\)-regular graphs.  If each site contains \(m\) fermionic
modes and the edge
terms are normalized in the high-degree convention, then there exists a product
state \(\varphi\) such that
\begin{equation}
        \tr(H\varphi)
        \le
        e_0(H)+\sqrt{\frac{2m}{D}}.
\end{equation}
The same argument gives the corresponding finite-temperature statement:
for a quadratic Hamiltonian, the product of the one-site marginals of the Gibbs
state has free energy within \(O(D^{-1/2})\) of the Gibbs variational optimum.
This result directly develops and substantially extends the program initiated in our earlier work
\citep{KrumnowZimborasEisertGroundApprox}.  That work established rigorous
Gaussian product-state approximations for general interacting fermionic
Hamiltonians from the weighted interaction graph and interpreted the underlying
correlation constraint as a non-symmetric fermionic de Finetti phenomenon.  The
present high-degree theorem retains the central idea that a local fermionic
degree of freedom cannot sustain strong correlations with many neighbours, but
specializes it to quadratic Hamiltonians, where every edge energy is exactly
determined by a two-point covariance block.  This extra structure replaces the
general many-body approximation argument by a direct covariance-budget estimate
and yields the sharper \(D^{-1/2}\) scaling.  The finite-temperature extension,
the exchangeable Gaussian de Finetti theorem below, and the approximate-Markov
result have no counterparts in Ref.~\citep{KrumnowZimborasEisertGroundApprox}.

The second application is a fermionic Gaussian de Finetti theorem.
Classical and quantum de Finetti theorems make precise the link between
extendibility and mean-field structure: permutation symmetry together with
extensions to arbitrarily many copies forces a state to be a mixture of
independent and identically distributed product states, while finite
extendibility yields quantitative approximations of this form.  This principle
has become a fundamental tool in quantum information theory, entanglement
theory, and rigorous mean-field analysis
\citep{Sturmer,HudsonMoody,ChristandlKoenigMitchisonRenner,RennerNatPhys,
BrandaoHarrowLocalMeasurements,LiSmith,RougerieDeFinetti}.  Fermionic systems
require additional care, because tensor-product language must be reconciled
with mode decompositions, anticommutation relations, and parity
superselection \citep{Negari_2026}.  These issues lead to several reasonable notions of fermionic
separability and mode entanglement
\citep{SchuchVerstraeteCiracFermions,BanulsCiracWolfFermionicEntanglement,
SpeeSchwaigerGiedkeKraus}.  They also motivate fermionic versions of de Finetti
theorems, such as the finite Majorana de Finetti theorem of Krumnow,
Zimbor\'as, and Eisert \citep{KrumnowZimborasEisert}.  Complementary fermionic
Gaussian de Finetti theorems based on Gaussian symmetry in the
matchgate-replica representation have recently been obtained, including an
exponential almost-i.i.d. version
\citep{SierantTurkeshiTarabungaMatchgateCommutant,
BurkatStudzinskiStrelchukExponentialDeFinetti}.

That said, our de Finetti result concerns a different setting.  It is restricted to
joint states that are themselves fermionic Gaussian and are extendible under
graded fermionic copy permutations, but within that setting it is completely
covariance-theoretic.  If a \(k\)-copy fermionic Gaussian state admits an
extension to \(n\) copies invariant under graded fermionic copy permutations,
then this symmetry forces all off-diagonal copy-copy covariance blocks to be
identical.
Admissibility of the \(n\)-copy covariance matrix then forces this common block
to be of order \(1/n\).  Converting the resulting covariance estimate into
trace norm, using perturbation bounds for free-fermionic states
\citep{BittelMeleEisertLeoneTraceDistance}, gives
\begin{equation}
        \left\|
        \rho_{A_1\cdots A_k}-\rho_A^{\otimes k}
        \right\|_1
        \le
        \frac{4m_A(k-1)}{n},
\end{equation}
where \(m_A\) is the number of fermionic modes in one copy.  Thus the
approximating de Finetti measure may be chosen to be the point mass at the
one-copy marginal.  In particular, an infinitely exchangeable fermionic
Gaussian state is exactly a product state.

The de Finetti result should also be compared with the recent work of
Negari and Salek \citep{NegariSalekExtendibility}.  There, the focus was on
two-sided \((k_1,k_2)\)-extendibility of bipartite fermionic Gaussian states,
its covariance-matrix characterization, and its consequences for approximation
by fermionic separable states.  The present work uses the same
covariance-matrix viewpoint and notation, but addresses a different
exchangeability problem: rather than approximating a bipartite extendible state
by the convex set of separable states, we study graded-permutation-invariant
extensions across many identical copies and prove closeness to the specific
product of the one-copy marginal.  Thus, the finite de Finetti theorem here is
an i.i.d. statement, with an explicit \(O(k/n)\) trace-norm bound, rather than a
separability statement.

Throughout this work, product and separability statements are understood with
respect to a fixed partition of fermionic modes, and all Gaussian states are
taken to be even.  For Gaussian states this convention has a simple covariance
interpretation: the product of the Gaussian marginals on disjoint regions is
obtained by deleting the off-diagonal covariance blocks between the regions.
This is the convention used in \citep{NegariSalekExtendibility}, and it is
consistent with the mode-based treatment of fermionic entanglement under parity
superselection
\citep{SchuchVerstraeteCiracFermions,BanulsCiracWolfFermionicEntanglement,
SpeeSchwaigerGiedkeKraus}.  The preliminaries section below recalls the precise
operator-algebraic and covariance-matrix conventions used in the proofs.

The third application concerns conditional mutual information and approximate
Markov structure.  For a tripartite state on \(A,B,C\), the conditional mutual
information
\begin{equation}
        I(A:C\mid B)
        =
        S(A,B)+S(B,C)-S(B)-S(A,B,C)
\end{equation}
measures the part of the correlation between \(A\) and \(C\) that remains after
the degrees of freedom in \(B\) have been retained.  Exact vanishing of this
quantity characterizes quantum Markov chains
\citep{HaydenJozsaPetzWinter,IbinsonLindenWinter}; quantitatively, the
Fawzi--Renner recoverability theorem and its refinements show that small
conditional mutual information implies approximate recovery by a quantum
channel acting on the conditioning system
\citep{FawziRenner,SutterTomamichelHarrow,JungeRennerSutterWildeWinter,
WildeRecoverability}.

Approximate Markov structure has become an important locality concept in
quantum many-body theory \citep{,negari2026criticalnonequilibriumphasesnoisy,kumar2026unlearnablephasesmatter}.  It is closely related to Gibbs-state locality,
Markov networks, and the locality of entanglement spectra
\citep{PoulinHastingsMarkov,KimConditionalIndependence,KatoBrandao,
KuwaharaKatoBrandao,KatoBrandaoEdge}.  Recent work has further clarified the
decay of conditional mutual information in thermal and many-body settings
\citep{kuwahara2024clusteringconditional,kato2025clusteringconditionalmutual}.
However, conditional mutual information is not controlled by ordinary
correlation decay in a purely formal way: conditioning can remove correlations,
but it can also create them.  This makes the question of conditional
independence across a spatial buffer genuinely different from the decay of
ordinary two-point functions or even from the decay of mutual information.

Fermionic Gaussian states again provide a rigid setting in which this question
can be attacked directly.  Entropies of reduced Gaussian states are trace
functions of compressed one-particle Green operators.  This permits an
expansion of the conditional mutual information in terms of powers of the
compressed Green operators.  If the state has exponentially clustering mutual
information, then for fixed finite regions this is equivalent, up to constants
and decay rates, to exponential decay of the corresponding off-diagonal
one-particle blocks; this is the Gaussian analogue of the connection between
information clustering and correlation decay studied in other many-body
settings
\citep{wolf2007arealawsquantumsystems,bluhm2021exponentialdecaymutualinformation}.
Using the trace expansion, we show that the Green-function paths connecting
two outer regions through a growing buffer give a summable contribution.  As a
result, for disjoint translates \(A\) and \(C\) of fixed finite reference
shapes \(A_0\) and \(C_0\), separated by a buffer \(B_r\), we prove
\begin{equation}
        I_\rho(A:C\mid B_r)
        \le
        \frac{K_{A_0,C_0,\rho}}{r},
\end{equation}
where the constant is independent of the buffer scale \(r\).  The
Fawzi--Renner theorem then gives
\begin{equation}
        \inf_{\mathcal R_{B_r\to B_r C}}
        \left\|
        \rho_{A,B_r,C}
        -
        (\id_A\otimes\mathcal R_{B_r\to B_r ,C})(\rho_{A,B_r})
        \right\|_1
        =
        O(r^{-1/2}).
\end{equation}
Thus spatial clustering at the one-particle level yields an approximate-recovery
description for fermionic Gaussian states.

The three approximation results are logically independent, but they share a
common mechanism.  High degree spreads a finite covariance budget over many
interaction edges.  Exchangeability forces all copy-copy covariance blocks to
be identical, and admissibility then makes each of them small.  Spatial
clustering suppresses the covariance paths that contribute to conditional
mutual information across a buffer.  In all three cases, a many-body
approximation statement is obtained by first proving a one-particle covariance
estimate and then converting it into the relevant variational or
information-theoretic quantity.

The remainder of this work is organized as follows.  Section~\ref{sec:preliminaries} fixes the
fermionic covariance-matrix notation, including Majorana covariance matrices,
Gaussian states, the Green operator, and the entropy formula.  Section~\ref{sec:highdegree}
proves covariance monogamy and applies it to product-state approximations for
quadratic Hamiltonians on high-degree graphs, both at zero and positive
temperature.  Section~\ref{sec:definetti} proves the finite fermionic Gaussian
de Finetti theorem and the infinite-exchangeability corollary.  Section~\ref{sec:markov}
proves the conditional mutual information bound for spatially clustering
fermionic Gaussian states and derives the corresponding recoverability
estimate.  Section~\ref{sec:discussion} discusses possible refinements and
extensions.

\section{Preliminaries}\label{sec:preliminaries}

We collect the notation and basic facts about fermionic Gaussian states that
will be used throughout this work.  All systems considered here contain finitely
many fermionic modes unless explicitly stated otherwise.  For an operator
\(X\), we write
\begin{equation}
        \|X\|_p
        =
        \bigl(\tr (X^\dagger X)^{p/2}\bigr)^{1/p},
        \qquad 1\le p<\infty ,
\end{equation}
for the Schatten \(p\)-norm, and \(\|X\|_\infty\) for the operator norm.  For
Hermitian matrices \(A,B\), the notation \(A\succeq0\) means that \(A\) is
positive semidefinite, and \(A\preceq B\) means \(B-A\succeq0\).

\subsection{Fermionic modes, parity, and local regions}

We begin by fixing the operator-algebraic conventions that determine
locality, parity, and signs throughout the covariance-matrix analysis.
An \(m\)-mode fermionic system is described by creation and annihilation
operators
$
        c_1^\dagger,\ldots,c_m^\dagger$ and
        $c_1,\ldots,c_m $,
satisfying the canonical anticommutation relations
\begin{equation}\label{eq:CAR}
        \{c_j,c_k^\dagger\}=\delta_{j,k}\one,
        \qquad
        \{c_j,c_k\}=\{c_j^\dagger,c_k^\dagger\}=0 
\end{equation}
for $j,k=1,\dots, m$.
The Hilbert space is the fermionic Fock space
\(\mathcal F_m\cong(\mathbb C^2)^{\otimes m}\), after choosing a fixed global
ordering of the modes.  This ordering is part of the convention: fermionic
operators supported on disjoint mode sets anticommute before passing to the
even local algebras, and therefore signs in covariance blocks depend on the
chosen global order.
The number operator and parity operator are given by
\begin{equation}
        \widehat N=\sum_{j=1}^m c_j^\dagger c_j,
        \qquad
        P=(-1)^{\widehat N}.
\end{equation}
Physical fermionic states obey the parity superselection rule
\begin{equation}\label{eq:parity-superselection}
        [\rho,P]=0 .
\end{equation}
Equivalently, a physical state is block diagonal with respect to the even and
odd parity sectors.  In particular, expectations of odd monomials in the
fermionic field operators vanish.  Throughout this work, 
all Gaussian states
are understood to be even, hence physical in this sense. 
Number-conserving fermionic states additionally commute with the total particle-number operator.

If \(A\) is a finite set of modes, we denote by \(m_A\) the number of
fermionic modes in \(A\).  The associated real Majorana one-particle space has
dimension
\begin{equation}
        N_A:=2m_A .
\end{equation}
We reserve the symbol \(M_A\) for the covariance matrix of the subsystem
\(A\), and use \(N_A\) for the Majorana one-particle dimension.  The projection
onto the Majorana one-particle space of \(A\) is denoted by \(\Pi_A\).

\subsection{Majorana operators and covariance matrices}

We next express the fermionic operators through Majorana modes and fix the covariance sign convention used in the proofs.
From the creation and annihilation operators we define Hermitian Majorana
operators by
\begin{equation}\label{eq:majorana-definition}
        \gamma_{2j-1}=c_j+c_j^\dagger,
        \qquad
        \gamma_{2j}=i(c_j-c_j^\dagger),
        \qquad j=1,\ldots,m .
\end{equation}
They satisfy the Clifford relations
\begin{equation}\label{eq:majorana-CAR}
        \{\gamma_p,\gamma_q\}=2\delta_{p,q}\one,
        \qquad
        \gamma_p^\dagger=\gamma_p .
\end{equation}
With the convention \eqref{eq:majorana-definition}, the parity operator may be
written as
\begin{equation}\label{eq:parity-majorana}
        P
        =
        (-i)^m\prod_{j=1}^m \gamma_{2j-1}\gamma_{2j}.
\end{equation}

For a physical state \(\rho\), its Majorana covariance matrix, abbreviated
CM, is the real antisymmetric matrix \(M\in\mathbb R^{2m\times2m}\) defined by
\begin{equation}\label{eq:majorana-covariance}
        M_{p,q}
        :=
        \frac{i}{2}\tr\!\left(\rho[\gamma_q,\gamma_p]\right)
        =
        \begin{cases}      \tr\!\left(i\gamma_q\gamma_p\rho\right), & p\ne q,\\[1mm]
        0, & p=q .
        \end{cases}
\end{equation}
This is the sign convention used in Ref.~\citep{NegariSalekExtendibility}.  Notice
in particular that, for \(p\ne q\),
\begin{equation}\label{eq:two-point-from-M}
        \tr(\rho\,\gamma_p\gamma_q)
        =
        iM_{p,q}.
\end{equation}
A matrix \(M\) is a valid fermionic covariance matrix if and only if it is real
antisymmetric and satisfies the bona fide condition
\begin{equation}\label{eq:covariance-positivity}
        \one+iM\succeq0 .
\end{equation}
Equivalently, the eigenvalues of the Hermitian matrix \(iM\) lie in
\([-1,1]\), or, equivalently, all singular values of \(M\) are at most one.
Thus there exists an orthogonal matrix \(O\in SO(2m)\) and real numbers
\(\lambda_1,\ldots,\lambda_m\in[-1,1]\) such that
\begin{equation}\label{eq:canonical-form-M}
        M
        =
        O
        \left(
        \bigoplus_{j=1}^m
        \begin{pmatrix}
        0 & \lambda_j\\
        -\lambda_j & 0
        \end{pmatrix}
        \right)
        O^T .
\end{equation}
For a pure Gaussian state one has \(M^2=-\one\), equivalently
\(|\lambda_j|=1\) for every \(j\).
For a general physical state, the covariance matrix records only the second
moments.  For a fermionic Gaussian state, however, the covariance matrix
determines the state completely.

\subsection{Fermionic Gaussian states}

We now turn to our main object of study, multi-mode Gaussian states.
A parity-preserving state \(\rho\) is called fermionic Gaussian if it can be
written, up to normalization, as the exponential of a quadratic form in
Majorana operators,
\begin{equation}\label{eq:gaussian-exponential}
        \rho
        =
        Z^{-1}
        \exp\!\left(
        \frac{i}{2}\sum_{p,q} h_{p,q}\gamma_p\gamma_q
        \right),
        \qquad
        h=-h^T\in\mathbb R^{2m\times2m},
\end{equation}
with the usual limiting interpretation for pure states.  Equivalently, a state
is Gaussian if and only if all of its higher-order correlations are determined
by its two-point correlations through Wick's rule.  Odd Majorana moments vanish,
and every even Majorana moment is given by the Pfaffian of the corresponding
two-point matrix.

Conversely, every real antisymmetric matrix \(M\) satisfying
\eqref{eq:covariance-positivity} is the covariance matrix of a unique even
fermionic Gaussian state.  We denote this state by \(\rho_M\).  Thus, within
the Gaussian class,
\begin{equation}
        M
        \longleftrightarrow
        \rho_M,
\end{equation}
is a one-to-one parametrization.
We shall use the following covariance-to-state continuity estimate for
fermionic Gaussian states~\citep{BittelMeleEisertLeoneTraceDistance}.  If \(M\) and \(M'\) 
are two valid covariance
matrices, then
\begin{equation}\label{eq:gaussian-covariance-lipschitz-prelim}
        \|\rho_M-\rho_{M'}\|_1
        \le
        \frac12\|M-M'\|_1 .
\end{equation}
This bound allows covariance estimates to be converted directly into
trace-norm estimates for Gaussian states.

\subsection{Restrictions, block structure, and Gaussian product states}

We describe how covariance matrices restrict to regions and how their off-diagonal blocks quantify the failure of Gaussian states to factorize.
The covariance formalism is compatible with restriction to subsystems.  
If
\(A\) is a finite region, then the reduced state \(\rho_A\) of a Gaussian state
\(\rho_M\) is Gaussian with covariance matrix
\begin{equation}\label{eq:restricted-covariance}
        M_A
        =
        \Pi_A M\Pi_A .
\end{equation}
More generally, for two disjoint regions \(A\) and \(B\), written in the
global ordering where the modes of \(A\) precede the modes of \(B\), the
covariance matrix has the block form
\begin{equation}\label{eq:block-covariance}
        M_{A,B}
        =
        \begin{pmatrix}
        M_A & X_{A,B}\\
        -X_{A,B}^T & M_B
        \end{pmatrix}.
\end{equation}
Here
\begin{equation}\label{eq:cross-covariance-block}
        X_{A,B}:=\Pi_A M\Pi_B,
\end{equation}
is the covariance block across the cut \(A|B\).  With the convention
\eqref{eq:majorana-covariance}, if \(a\) labels a Majorana mode in \(A\) and
\(b\) labels a Majorana mode in \(B\), then
\begin{equation}
        (X_{A,B})_{a,b}
        =
        \tr\!\left(i\gamma_b\gamma_a\rho\right).
\end{equation}
For disjoint regions \(A,B_1,\ldots,B_k\), we shall often write
\begin{equation}
        X_i:=
        X_{A, B_i}.
\end{equation}

The product of the Gaussian marginals on \(A\) and \(B\) is the Gaussian state
with covariance matrix
\begin{equation}\label{eq:product-covariance}
        M_A\oplus M_B .
\end{equation}
Thus, in the Gaussian setting, passing from a joint state to the product of
its Gaussian marginals is exactly the operation of deleting the off-diagonal
covariance block \(X_{A,B}\).  In particular,
\begin{equation}
        X_{A,B}=0
        \quad\Longleftrightarrow\quad
        \rho_{M_{A,B}}=\rho_{M_A}\otimes\rho_{M_B},
\end{equation}
for Gaussian states, with the tensor product understood with respect to the
fixed global mode ordering.

The sign convention in \eqref{eq:majorana-covariance} is relevant when writing
expectations of oriented quadratic Hamiltonian terms.  If
\begin{equation}
        H_{A,B}
        =
        i\bm\gamma_A^T K_{A,B}\bm\gamma_B,
        \qquad
        K_{A,B}\in\mathbb R^{N_A\times N_B},
\end{equation}
then
\begin{equation}\label{eq:quadratic-expectation-M}
        \tr(\rho H_{A,B})
        =
        -\tr(K_{A,B}^T X_{A,B}).
\end{equation}
The sign can of course be absorbed into the definition of the oriented
coupling matrix \(K_{A,B}\), and the estimates in this work use only the
absolute value of this pairing.

\subsection{Majorana Green operator and entropy}

The final preliminary connects the covariance matrix to entropy, preparing the conditional mutual-information argument below.
For entropy estimates it is convenient to use the one-particle operator
associated with the covariance matrix.  Given a valid covariance matrix \(M\),
define
\begin{equation}\label{eq:green-definition}
        G
        :=
        \frac{\one+iM}{2}.
\end{equation}
Then
\begin{equation}
        0\preceq G\preceq\one,
        \qquad
        G^\dagger=G,
        \qquad
        M=-i(2G-\one).
\end{equation}
Since \(M\) is real and antisymmetric, \(G\) also satisfies the particle-hole
symmetry
\begin{equation}\label{eq:particle-hole-G}
        G^T=\one-G .
\end{equation}
For a finite region \(A\), we write
\begin{equation}\label{eq:restricted-G}
        G_A
        :=
        \Pi_A G\Pi_A .
\end{equation}
If \(A\) contains \(m_A\) fermionic modes, then
\begin{equation}
        \rank\Pi_A=N_A=2m_A,
        \qquad
        \tr G_A=m_A .
\end{equation}
The entropy of the reduced Gaussian state \(\rho_A\) is
\begin{equation}\label{eq:entropy-green}
        S(\rho_A)
        =
        -\tr\bigl(G_A\log G_A\bigr),
\end{equation}
where the trace is taken over the Majorana one-particle space of \(A\).  The
particle-hole symmetry \eqref{eq:particle-hole-G} ensures that this expression
counts each physical fermionic mode correctly: the eigenvalues of \(G_A\) come
in pairs \(\lambda,1-\lambda\).

For two finite regions \(A\) and \(B\), we also use the block notation
\begin{equation}
        G_{A,B}:=\Pi_A G\Pi_B .
\end{equation}
In spatial sections, when \(x,y\) are lattice sites or finite cells,
\(G_{x, y}\) denotes the corresponding one-particle block from the Majorana
space over \(y\) to the Majorana space over \(x\).  Thus \(G_X\) is the
compression of \(G\) to \(X\), whereas \(G_{x, y}\) is the off-diagonal block
between the sites \(x\) and \(y\).

\section{Product-state approximation on high-degree graphs}
\label{sec:highdegree}

A central problem in many-body theory is to identify variational classes that
approximate ground-state energies without requiring a description of the full
many-body state.  Product states are the simplest possible ansatz.  For a
general local Hamiltonian, however, there is no reason for such an ansatz to be
accurate: local interactions may create entanglement on many edges, and the
ground-state energy can depend essentially on these correlations.  It is
therefore important to understand regimes in which the variational problem
simplifies and product states become approximately optimal.

One such regime is the dense, or high-degree, interaction-graph limit.  If each
local system interacts with many others, then in the normalized Hamiltonian the
contribution of any single edge is small in the global energy.  Brand\~ao and
Harrow \citep{BrandaoHarrowGround,BrandaoHarrowStates} have made this principle
quantitative for general qudit Hamiltonians: for local dimension
\(d_{\rm loc}\), the product-state approximation error scales as
\begin{equation}
        \left(\frac{d_{\rm loc}^2\log d_{\rm loc}}{D}\right)^{1/3},
\end{equation}
where \(D\) is the degree of the interaction graph.

The purpose of this section is to prove a substantially sharper version of this phenomenon
for quadratic fermionic Hamiltonians.  The improvement comes from the
covariance-matrix structure 
of fermionic systems.  For a quadratic fermionic
interaction, the energy of an edge is determined entirely by the two-point
Majorana covariance block across that edge. We show that a fixed fermionic
region can share only a bounded total amount of two-point covariance with many
disjoint regions.  Thus, on a high-degree graph, the average covariance across
an interaction edge is small.  This gives a product-state approximation with
error scaling as
\begin{equation}
        \left(\frac{\log d_{\rm loc}}{D}\right)^{1/2}.
\end{equation}
In particular, the dependence on \(d_{\rm loc}\) improves from polynomial to
polylogarithmic, while the dependence on the degree improves from \(D^{-1/3}\)
to \(D^{-1/2}\).

Conceptually, this is the exactly quadratic counterpart of the interacting
fermionic approximation theorem in our earlier work
\citep{KrumnowZimborasEisertGroundApprox}: the same obstruction to distributing
fermionic correlations is used here at the covariance level, where it yields a
stronger rate and a substantially shorter argument.
We use the high-degree normalization of Brand\~ao and Harrow
\citep{BrandaoHarrowGround,BrandaoHarrowStates}.  Let
\(\mathcal G=(V,E)\) be a finite simple \(D\)-regular graph, with \(D\ge1\),
on \(n=|V|\) vertices.  In particular, the graph has neither loops nor
parallel edges.  The
vertices label local systems \(Q_1,\ldots,Q_n\), each of local dimension
\(d_{\rm loc}\), and the edges specify the two-body interactions.  We write the
Hamiltonian as the edge average
\begin{equation}
        H
        =
        \mathbb E_{(i,j)\in E} H_{i,j}
        =
        \frac{2}{nD}\sum_{(i,j)\in E} H_{i,j},
        \label{eq:BH-normalized-H}
\end{equation}
where \(H_{i,j}\) is supported on \(Q_i\cup Q_j\), and
\(\|H_{i,j}\|\le 1\) for every edge.  With this convention,
\(\tr(H\rho)\) is an intensive energy and the local interaction terms obey a
degree-independent operator-norm bound.

We now restrict to the quadratic fermionic case.  Each \(Q_i\) is the Fock
space of \(m\) fermionic modes, so that
$ d_{\rm loc}=2^m $.
Let
\begin{equation}
        \bm\gamma_i
        =
        (\gamma_{i,1},\ldots,\gamma_{i,2m})^T,
\end{equation}
denote the vector of Majorana operators at vertex \(i\).  After fixing an
arbitrary orientation of each edge, we write the inter-site interaction as
\begin{equation}
        H_{i,j}
        =
        i\bm\gamma_i^T K_{i,j}\bm\gamma_j,
        \qquad
        K_{i,j}\in\mathbb R^{2m\times 2m}.
        \label{eq:fermionic-edge-Hij}
\end{equation}
Here the factor of \(i\) makes the operator self-adjoint, since Majorana
operators on distinct sites anticommute.  The chosen orientation is only a
bookkeeping convention for writing \(K_{i,j}\); the estimates below depend only
on norms and are independent of this choice.

To proceed, let
\begin{equation}
        e_0(H)
        :=
        \min_{\rho}\tr(H\rho)
        \label{eq:ground-energy-definition},
\end{equation}
be the ground-state energy.  The high-degree product approximation for
quadratic fermions is the following.
Here and below, variational optimizations are over physical parity-preserving
fermionic states.

\begin{theorem}[High-degree product approximation for quadratic fermions]
\label{thm:fermionic-product-state-approximation}
Let \(\mathcal G=(V,E)\) be a finite simple \(D\)-regular graph, with
\(D\ge1\), on \(n\) vertices, and let
\(H\) be a quadratic fermionic Hamiltonian of the form
\eqref{eq:BH-normalized-H}--\eqref{eq:fermionic-edge-Hij}.  Assume that
\(\|H_{i,j}\|\le 1\) for every \((i,j)\in E\).  Then there exists a physical
product state
\begin{equation}
        \varphi
        =
        \varphi^{Q_1}\otimes\cdots\otimes\varphi^{Q_n},
\end{equation}
such that
\begin{equation}
        \tr(H\varphi)
        \le
        e_0(H)
        +
        \sqrt{\frac{2m}{D}}
        =
        e_0(H)
        +
        \sqrt{\frac{2\log_2 d_{\rm loc}}{D}} .
        \label{eq:fermionic-product-state-bound}
\end{equation}
In fact, each factor of \(\varphi\) may be chosen as a pure parity eigenstate.
\end{theorem}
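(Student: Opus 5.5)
The plan is to observe that every physical product state gives energy exactly zero, and then to use covariance monogamy to show that the true ground-state energy satisfies \(e_0(H)\ge-\sqrt{2m/D}\).

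\emph{Step 1 (product states give zero energy).} Let \(\varphi=\varphi^{Q_1}\otimes\cdots\otimes\varphi^{Q_n}\) be a product of even states, with respect to the fixed global mode ordering. Then every cross expectation \(\tr(\varphi\,\gamma_{i,a}\gamma_{j,b})\) with \(i\ne j\) factorizes into a product of odd one-site moments, and these vanish. So all cross covariance blocks \(X_{i,j}\) vanish, and by \eqref{eq:quadratic-expectation-M} we get \(\tr(H\varphi)=0\). I will take each \(\varphi^{Q_i}\) to be the Fock vacuum on site \(i\), which is a pure parity eigenstate. This also settles the final sentence of the theorem. It therefore suffices to prove \(\tr(H\rho)\ge-\sqrt{2m/D}\) for every physical state \(\rho\).

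\emph{Step 2 (reduce edge energies to covariance blocks).} Fix a physical state \(\rho\), which need not be Gaussian, with covariance matrix \(M\). By definition, \(M\) is real antisymmetric and satisfies \eqref{eq:covariance-positivity}, so \(\|M\|_\infty\le1\). By \eqref{eq:quadratic-expectation-M}, \(|\tr(\rho H_{i,j})|=|\tr(K_{i,j}^TX_{i,j})|\le\|K_{i,j}\|_1\|X_{i,j}\|_\infty\).

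I then need to show \(\|K_{i,j}\|_1\le\|H_{i,j}\|\le1\). Take a singular value decomposition \(K_{i,j}=U\Sigma V^T\) with \(U,V\) orthogonal. Then \(\gamma'=U^T\bm\gamma_i\) and \(\gamma''=V^T\bm\gamma_j\) are again Majorana families on the two sites, and
\[
H_{i,j}=\sum_k s_k\, i\gamma'_k\gamma''_k .
\]
The operators \(i\gamma'_k\gamma''_k\) commute pairwise and square to \(\one\). On a joint eigenvector with all eigenvalues \(+1\), the operator \(H_{i,j}\) takes the value \(\sum_k s_k=\|K_{i,j}\|_1\). Such a joint eigenvector exists because these are commuting independent Majorana bilinears, i.e.\ a pure Gaussian state with the corresponding covariance. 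Hence \(\|H_{i,j}\|=\|K_{i,j}\|_1\).

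Consequently \(|\tr(\rho H_{i,j})|\le\|X_{i,j}\|_\infty\le\|X_{i,j}\|_2\).

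\emph{Step 3 (covariance monogamy and averaging).} Fix a vertex \(i\). Since \(\|M\|_\infty\le1\), we have
\[
\sum_{j\ne i}X_{i,j}X_{i,j}^T\preceq \Pi_i MM^T\Pi_i\preceq\Pi_i .
\]
Taking traces gives \(\sum_j\|X_{i,j}\|_2^2\le 2m\). Summing over \(i\) counts each edge twice, so
\[
\sum_{(i,j)\in E}\|X_{i,j}\|_2^2\le nm .
\]
By Cauchy--Schwarz over the \(|E|=nD/2\) edges,
\[
|\tr(H\rho)|\le\frac{2}{nD}\sum_{E}\|X_{i,j}\|_2\le\frac{2}{nD}\sqrt{\tfrac{nD}{2}\,nm}=\sqrt{\frac{2m}{D}}.
\]
Minimizing over \(\rho\) gives \(e_0(H)\ge-\sqrt{2m/D}=\tr(H\varphi)-\sqrt{2m/D}\). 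Since \(m=\log_2 d_{\rm loc}\), this is the claimed bound \eqref{eq:fermionic-product-state-bound}.

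\emph{Expected obstacle.} The only delicate point is the identity \(\|H_{i,j}\|=\|K_{i,j}\|_1\) in Step 2, which converts the operator-norm normalization of the edge terms into a trace-norm bound on the couplings. Even the inequality \(\|K_{i,j}\|_\infty\le1\) alone would not suffice, because it would lose a factor of \(\sqrt{2m}\). I would handle this through the singular-value decomposition and simultaneous diagonalization argument above, checking that the rotated families \(\gamma',\gamma''\) still satisfy \eqref{eq:majorana-CAR} and mutually anticommute. The other thing to verify carefully is the sign and ordering convention showing that cross blocks of even product states vanish.
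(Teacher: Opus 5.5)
Your proof is correct. It rearranges the paper's argument in a way worth noting.

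\emph{The paper's route.} It fixes a Gaussian ground state $\rho_0$ and compares it edge by edge with the product $\sigma$ of its one-site marginals. It bounds $\tr(H\sigma)-e_0(H)$ by $\E_{(i,j)\in E}\|X^{(0)}_{i,j}\|_2$, using $\|K_{i,j}\|_2\le\|K_{i,j}\|_1=\|H_{i,j}\|$, Cauchy--Schwarz, covariance monogamy and Jensen. It then extracts a pure product parity eigenstate from the spectral decomposition of $\sigma$.

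\emph{Your route.} You make explicit what the paper leaves implicit. Every inter-site Majorana bilinear has zero expectation in any even product state, so $\tr(H\sigma)=0$. The theorem therefore reduces to the lower bound $e_0(H)\ge-\sqrt{2m/D}$. You obtain this from the stronger two-sided estimate $|\tr(H\rho)|\le\sqrt{2m/D}$ for every physical $\rho$, equivalently $\|H\|\le\sqrt{2m/D}$.

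\emph{What each buys.} Your version lets you take the vacuum as $\varphi$. You no longer need a Gaussian ground state or the convex-decomposition step, and the purity clause becomes trivial. The core estimates are the same: the identity $\|H_{i,j}\|=\|K_{i,j}\|_1$, covariance monogamy, and averaging over the $nD/2$ edges. Your H\"older step $\|K\|_1\|X\|_\infty$ is an inessential variant of the paper's $\|K\|_2\|X\|_2$. The paper's route buys robustness. Because $\sigma$ preserves every one-site marginal, the same proof goes through verbatim when arbitrary even on-site terms are added (Remark~\ref{rem:onsite-terms}). The same comparison also handles the specific product $\sigma_\beta$ of Gibbs marginals in Theorem~\ref{thm:fermionic-free-energy-product}. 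With on-site terms, $\|H\|$ is no longer small and the vacuum would not work.
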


The proof isolates the genuinely fermionic input in a covariance monogamy
bound.  This bound is not a property of the Hamiltonian and does not require
Gaussianity; it follows from the admissibility condition for Majorana
covariance matrices.  Once this monogamy estimate is established, the proof of
Theorem~\ref{thm:fermionic-product-state-approximation} is direct: applying
the bound at each vertex controls the average covariance across an interaction
edge, and the quadratic form \eqref{eq:fermionic-edge-Hij} converts this
covariance estimate into an energy estimate.

\subsection{Covariance monogamy}
\label{subsec:covariance-monogamy}

The first ingredient is a monogamy bound for fermionic two-point correlations.
If one region \(A\) is correlated with many disjoint regions
\(B_1,\ldots,B_k\), then the covariance blocks from \(A\) to these regions
cannot all be large.  More precisely, the total covariance budget of \(A\) is
finite, and therefore the average two-point covariance between \(A\) and one of
the regions \(B_i\) goes to zero as the number of regions grows.

Let \(\rho\) be a physical fermionic state on \(A, B_1,\cdots ,B_k\), not
necessarily Gaussian. We can write its  
Majorana covariance matrix in block form as
\begin{equation}
M_{AB_1\cdots B_k}
=
\begin{pmatrix}
M_A & X_1 & X_2 & \cdots & X_k\\
-X_1^T & M_{B_1} & * & \cdots & *\\
-X_2^T & * & M_{B_2} & \cdots & *\\
\vdots & \vdots & \vdots & \ddots & \vdots\\
-X_k^T & * & * & \cdots & M_{B_k}
\end{pmatrix}.
\label{eq:monogamy-block-form}
\end{equation}
Here
\begin{equation}
        X_i=\Pi_A M\Pi_{B_i},
\end{equation}
is the two-point covariance block between \(A\) and \(B_i\), and the starred
blocks denote covariance blocks among the regions \(B_1,\ldots,B_k\), which
will not be used.

\begin{theorem}[Sharp monogamy of fermionic two-point correlations]
\label{thm:monogamy}
Let \(A\) contain \(m_A\) fermionic modes.  Let \(\rho\) be a physical
fermionic state, not necessarily Gaussian, on disjoint regions
\(A,B_1,\dots,B_k\), and let \(M\) be its Majorana covariance matrix.  Then
\begin{equation}\label{eq:opineq}
\sum_{i=1}^{k}X_iX_i^{T}\;\preceq\;\one_A ,
\end{equation}
and consequently
\begin{equation}\label{eq:trineq}
\E_i\,\HS{X_i}^{2}
=\frac{1}{k}\sum_{i=1}^{k}\HS{X_i}^{2}
\;\le\;\frac{2m_A}{k},
\end{equation}
where $i$ is chosen uniformly from $\{1,\dots,k\}$. Both bounds are optimal,
see Example~\ref{ex:sharp}.
\end{theorem}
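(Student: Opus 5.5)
Our plan is to derive the operator inequality \eqref{eq:opineq} directly from the admissibility condition, phrased as a bound on the operator norm of the row block $(X_1,\dots,X_k)$. The starting point is that, for any physical state (Gaussian or not), the covariance matrix $M$ of \eqref{eq:majorana-covariance} is real antisymmetric and satisfies $\one+iM\succeq0$. Replacing $M$ by $-M$, which corresponds to complex conjugation of the state, we also get $\one-iM\succeq0$. Hence $iM$ is Hermitian with spectrum in $[-1,1]$, so $\|M\|_\infty\le1$ and therefore $MM^T=-M^2\preceq\one$. If one prefers not to invoke the equivalence stated in the preliminaries, this can be checked directly for non-Gaussian states. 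For any real vector $v$, the operator $\gamma(v)=\sum_p v_p\gamma_p$ satisfies $\gamma(v)^2=\|v\|^2\one$. Taking $w=u+i v$ and using $\tr\rho\,\gamma(w)^\dagger\gamma(w)\ge0$ gives $\|w\|^2+w^\dagger(iM)w\cdot(\pm)\ge0$. This yields the positivity condition and, via the real and imaginary parts, the norm bound.

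Next we compress to the rows of $A$. Let $\Pi_A$ be the projection onto the Majorana space of $A$ and $\Pi_B=\sum_i\Pi_{B_i}$. Since $MM^T\preceq\one$, compressing gives
\begin{equation}
\Pi_A M M^T\Pi_A\preceq \one_A .
\end{equation}
We then expand $\Pi_A M M^T\Pi_A=\Pi_A M(\Pi_A+\Pi_B)M^T\Pi_A$. This equals
\begin{equation}
M_AM_A^T+\sum_{i=1}^k X_iX_i^T ,
\end{equation}
because the $B_i$ are disjoint and $X_i=\Pi_AM\Pi_{B_i}$. Since $M_AM_A^T\succeq0$, dropping it gives $\sum_iX_iX_i^T\preceq\one_A$, which is \eqref{eq:opineq}. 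We could even keep the sharper form $\sum_i X_iX_i^T\preceq \one_A+M_A^2$, but the statement does not require it.

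Taking the trace over the $N_A=2m_A$-dimensional Majorana space of $A$ gives $\sum_i\HS{X_i}^2=\tr\sum_iX_iX_i^T\le 2m_A$. Dividing by $k$ yields \eqref{eq:trineq}.

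The only step we expect to need care is the first one: ensuring that $\|M\|_\infty\le1$ holds for arbitrary, non-Gaussian physical states, and keeping the sign and orientation conventions of \eqref{eq:majorana-covariance} straight. Here we rely on the fact, stated in the preliminaries, that the bona fide condition is equivalent to all singular values being at most one. Optimality is deferred to Example~\ref{ex:sharp}. The natural candidate is a pure Gaussian state in which each Majorana mode of $A$ is perfectly paired with modes spread uniformly over the $B_i$. This makes $M_A=0$ and $\sum_iX_iX_i^T=\one_A$, so both inequalities are saturated.
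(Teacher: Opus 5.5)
Your proof is correct and is essentially the paper's argument: the paper's step $\Pi_A M\Pi_B M^T\Pi_A\preceq\Pi_A MM^T\Pi_A\preceq\Pi_A$ is precisely your observation that the omitted term $\Pi_A M\Pi_A M^T\Pi_A=M_AM_A^T$ is positive semidefinite, combined with $MM^T\preceq\one$. The only imprecision is the side remark that complex conjugation of the state sends $M$ to $-M$ (in the standard representation it gives $-DMD$ with $D$ a diagonal sign matrix); the clean reason for $\one-iM\succeq0$ is that it equals $(\one+iM)^T$, and your direct check via $\gamma(w)\gamma(w)^\dagger$ is also fine.
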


\begin{proof}
The projections $\Pi_{B_1},\dots,\Pi_{B_k}$ have mutually orthogonal ranges,
so $\Pi_B:=\sum_{i=1}^{k}\Pi_{B_i}$ is the orthogonal projection onto the
Majorana one-particle space of $B:=B_1\cup\dots\cup B_k$, and
$\Pi_B\preceq\one$. Using $\Pi_{B_i}\Pi_{B_i}^{T}=\Pi_{B_i}$,
\begin{equation}
\sum_{i=1}^{k}X_iX_i^{T}
=\sum_{i=1}^{k}\Pi_A M\Pi_{B_i}M^{T}\Pi_A
=\Pi_A M\,\Pi_B\,M^{T}\Pi_A
\;\preceq\;\Pi_A MM^{T}\Pi_A
\;\preceq\;\Pi_A ,
\end{equation}
where the first inequality is the congruence of $\Pi_B\preceq\one$ by $M$, and the second follows from $MM^{T}\preceq\one$. Restricting to the range of $\Pi_A$ yields \eqref{eq:opineq}. Taking the trace gives
\begin{equation}
\sum_{i=1}^{k}\HS{X_i}^{2}
=\tr\Bigl(\sum_{i=1}^{k}X_iX_i^{T}\Bigr)
\le\tr\Pi_A=N_A=2m_A ,
\end{equation}
and dividing by $k$ proves \eqref{eq:trineq}.
\end{proof}

\begin{example}[Optimality for every $k$]\label{ex:sharp}
Let each $B_i$ contain $m_A$ modes, and consider the real antisymmetric
matrix with vanishing diagonal and $B_iB_j$ blocks,
\begin{equation}
M=\begin{pmatrix}
0 & X_1 & \cdots & X_k\\
-X_1^{T} & 0 & \cdots & 0\\
\vdots & \vdots & \ddots & \vdots\\
-X_k^{T} & 0 & \cdots & 0
\end{pmatrix},
\qquad
X_i=k^{-1/2}\,\one_{N_A}.
\end{equation}
Then $MM^{T}=\one_{N_A}\oplus\bigl(\tfrac{1}{k}J_k\otimes\one_{N_A}\bigr)$,
where $J_k$ is the $k\times k$ all-ones matrix. Since $\tfrac1k J_k$ has
eigenvalues $1$ and $0$, we get
$\mathrm{spec}(MM^{T})\subset\{0,1\}$, so $\|M\|_{\infty}=1$ and $M$ is the covariance matrix of a unique fermionic Gaussian state. For this state,
\begin{equation}
\sum_{i=1}^{k}X_iX_i^{T}=\one_A,
\qquad
\HS{X_i}^{2}=\frac{N_A}{k}=\frac{2m_A}{k}
\quad\text{for every }i,
\end{equation}
so \eqref{eq:opineq} is saturated and \eqref{eq:trineq} holds with equality
termwise. Hence neither the operator bound, nor the constant $2m_A$, nor the
$1/k$ scaling can be improved.
\end{example}

\subsection{Proof of the product-state approximation}
\label{subsec:proof-product-approximation}

We now use the covariance monogamy bound to prove
Theorem~\ref{thm:fermionic-product-state-approximation}.  Since \(H\) is a
finite quadratic fermionic Hamiltonian, it has a fermionic Gaussian ground
state.  Choose such a ground state \(\rho_0\), so that
\begin{equation}
        \tr(H\rho_0)=e_0(H).
\end{equation}
For an oriented edge \((i,j)\in E\), let \(M^{(0)}\) be the covariance matrix
of \(\rho_0\), and define
\begin{equation}
        X_{i,j}^{(0)}
        :=
        \Pi_i M^{(0)}\Pi_j .
\end{equation}
Thus, with the convention of \eqref{eq:majorana-covariance},
\begin{equation}
        X_{i,j}^{(0)}(a,b)
        =
        \tr\!\left(i\gamma_{j,b}\gamma_{i,a}\rho_0\right).
\end{equation}
Consequently,
\begin{equation}
        \tr\!\left(\rho_0\,i\gamma_{i,a}\gamma_{j,b}\right)
        =
        -X_{i,j}^{(0)}(a,b).
\end{equation}
Since
\begin{equation}
        H_{i,j}
        =
        i\bm\gamma_i^T K_{i,j}\bm\gamma_j,
\end{equation}
we have
\begin{equation}
        \tr(\rho_0 H_{i,j})
        =
        -\tr\!\left(K_{i,j}^T X_{i,j}^{(0)}\right).
        \label{eq:edge-expectation-covariance}
\end{equation}

We first translate the operator normalization of \(H_{i,j}\) into the
one-particle bound needed for the covariance estimate.  Let \(s_\alpha\) be
the singular values of \(K_{i,j}\).  By applying orthogonal changes of
Majorana basis at the two endpoints, the edge term is brought to the form
\begin{equation}
        \sum_\alpha s_\alpha\, i\eta_\alpha\widetilde\eta_\alpha ,
\end{equation}
where the operators \(i\eta_\alpha\widetilde\eta_\alpha\) are commuting
self-adjoint involutions.  Hence
\begin{equation}
        \|H_{i,j}\|
        =
        \sum_\alpha s_\alpha
        =
        \|K_{i,j}\|_1 .
\end{equation}
In particular,
\begin{equation}
        \|K_{i,j}\|_2
        \le
        \|K_{i,j}\|_1
        =
        \|H_{i,j}\|
        \le
        1 .
\end{equation}
Therefore, by \eqref{eq:edge-expectation-covariance} and Cauchy--Schwarz,
\begin{equation}
        \left|\tr(\rho_0 H_{i,j})\right|
        =
        \left|\tr\!\left(K_{i,j}^T X_{i,j}^{(0)}\right)\right|
        \le
        \|K_{i,j}\|_2\,\|X_{i,j}^{(0)}\|_2
        \le
        \|X_{i,j}^{(0)}\|_2 .
        \label{eq:edge-energy-by-X}
\end{equation}

Now fix a vertex \(i\).  Applying Theorem~\ref{thm:monogamy} with
\(A=Q_i\) and with \(B_j=Q_j\) for \(j\in N(i)\), where \(N(i)\) denotes the
neighbourhood of \(i\), gives
\begin{equation}
        \sum_{j\in N(i)}
        \|\Pi_i M^{(0)}\Pi_j\|_2^2
        \le
        2m .
\end{equation}
Summing this estimate over all \(i\in V\) counts each undirected edge twice.
The two orientations give covariance blocks that differ only by a transpose
and a sign, and therefore have the same Hilbert--Schmidt norm.  Since
\(|E|=nD/2\), we obtain
\begin{equation}
        \mathbb E_{(i,j)\in E}
        \|X_{i,j}^{(0)}\|_2^2
        \le
        \frac{2m}{D}.
\end{equation}
Jensen's inequality then gives
\begin{equation}
        \mathbb E_{(i,j)\in E}
        \|X_{i,j}^{(0)}\|_2
        \le
        \sqrt{\frac{2m}{D}} .
        \label{eq:average-edge-X-bound}
\end{equation}

Let
\begin{equation}
        \sigma
        =
        \rho_0^{Q_1}\otimes\cdots\otimes\rho_0^{Q_n},
\end{equation}
be the product of the one-site marginals of \(\rho_0\).  At the covariance
level, passing from \(\rho_0\) to \(\sigma\) keeps all one-site covariance
blocks and deletes all covariance blocks between distinct sites.  Therefore,
for every edge \((i,j)\in E\), the covariance block of \(\sigma\) between
\(i\) and \(j\) is zero.  Using \eqref{eq:edge-expectation-covariance}, we get
\begin{equation}
        \tr(\sigma H_{i,j})-\tr(\rho_0 H_{i,j})
        =
        \tr\!\left(K_{i,j}^T X_{i,j}^{(0)}\right).
\end{equation}
Averaging over edges and using \eqref{eq:average-edge-X-bound}, we find
\begin{equation}
\begin{aligned}
        \tr(H\sigma)-e_0(H)
        &=
        \mathbb E_{(i,j)\in E}
        \left[
        \tr(\sigma H_{i,j})-\tr(\rho_0 H_{i,j})
        \right]  \\
        &\le
        \mathbb E_{(i,j)\in E}
        \left|
        \tr\!\left(K_{i,j}^T X_{i,j}^{(0)}\right)
        \right|  \\
        &\le
        \mathbb E_{(i,j)\in E}
        \|X_{i,j}^{(0)}\|_2
        \le
        \sqrt{\frac{2m}{D}} .
\end{aligned}
\end{equation}
Thus
\begin{equation}
        \tr(H\sigma)
        \le
        e_0(H)+\sqrt{\frac{2m}{D}} .
\end{equation}

The state \(\sigma\) is a product state, but it need not be pure.  Each
one-site marginal \(\rho_0^{Q_i}\) commutes with the corresponding local
parity operator and therefore admits a spectral decomposition into pure parity
eigenstates.  Consequently, \(\sigma\) is a convex combination of physical
pure product states.  The energy is affine in the state, so at least one such
pure product state
\begin{equation}
        \varphi
        =
        \varphi^{Q_1}\otimes\cdots\otimes\varphi^{Q_n},
\end{equation}
in this decomposition satisfies
\begin{equation}
        \tr(H\varphi)\le \tr(H\sigma).
\end{equation}
Consequently,
\begin{equation}
        \tr(H\varphi)
        \le
        e_0(H)+\sqrt{\frac{2m}{D}}
        =
        e_0(H)+\sqrt{\frac{2\log_2 d_{\rm loc}}{D}} .
\end{equation}
This proves Theorem~\ref{thm:fermionic-product-state-approximation}.

\begin{remark}[On-site terms]
\label{rem:onsite-terms}
The theorem is stated for Hamiltonians whose edge terms are purely inter-site
quadratic fermionic interactions.  This is only for notational simplicity.  The
same conclusion holds if one adds arbitrary even on-site terms,
\begin{equation}
        H_{\rm loc}
        =
        \frac1n\sum_{i\in V} h_i,
\end{equation}
where each \(h_i\) is an even self-adjoint operator on \(Q_i\), equivalently
\([h_i,P_i]=0\), with \(P_i\) the local parity operator, but need not be
quadratic.  Indeed, in the proof above the approximating product state is
obtained by replacing a state \(\rho\) with the product of its
one-site marginals,
\begin{equation}
        \rho^{Q_1}\otimes\cdots\otimes\rho^{Q_n}.
\end{equation}
This replacement preserves every one-site reduced state, and hence preserves
the expectation of \(H_{\rm loc}\).  The only terms whose expectation can
change are the inter-site quadratic terms, and these are exactly the terms
controlled by the covariance monogamy bound.  Thus the hypothesis needed for
the argument is not that the whole Hamiltonian be quadratic, but only that the
inter-site interactions along the graph are quadratic fermionic operators.
For the enlarged Hamiltonian one starts with any physical ground state;
Gaussianity is not needed because Theorem~\ref{thm:monogamy} applies to
arbitrary physical states.
\end{remark}

\subsection{Finite-temperature free energy}
\label{subsec:free-energy-product-approximation}

The covariance-monogamy mechanism is not confined to ground states.  It also
shows that the thermodynamic variational problem becomes effectively
mean-field-like on interaction graphs of high degree.  To make this precise,
let \(\beta>0\) denote the inverse temperature and define the free-energy
functional by
\begin{equation}
        F_\beta(\omega)
        =
        \tr(H\omega)-\beta^{-1}S(\omega),
\end{equation}
where
\begin{equation}
        S(\omega)=-\tr(\omega\log\omega),
\end{equation}
is the von Neumann entropy.  Here \(\beta\) is conjugate to the normalized,
intensive Hamiltonian \(H\).  Equivalently, with \(\widehat H=nH\) and
\(\beta_{\rm phys}=\beta/n\), the functional \(F_\beta\) is the physical free
energy per site for \((\widehat H,\beta_{\rm phys})\).  Let
\begin{equation}
        \rho_\beta
        =
        \frac{e^{-\beta H}}{\tr(e^{-\beta H})},
\end{equation}
be the Gibbs state of \(H\).  Since \(H\) is quadratic, \(\rho_\beta\) is a
fermionic Gaussian state.

\begin{theorem}[Fermionic Gaussian product approximation for free energy]
\label{thm:fermionic-free-energy-product}
Let \(\mathcal G=(V,E)\) be a finite simple \(D\)-regular graph, with
\(D\ge1\), on \(n\) vertices, and let
\(H\) be a quadratic fermionic Hamiltonian of the form
\eqref{eq:BH-normalized-H}--\eqref{eq:fermionic-edge-Hij}.  Assume that
\(\|H_{i,j}\|\le 1\) for every \((i,j)\in E\).  For \(\beta>0\), let
\(\rho_\beta\) be the Gibbs state of \(H\), and define the product of its
one-site marginals by
\begin{equation}
        \sigma_\beta
        =
        \rho_\beta^{Q_1}\otimes\cdots\otimes\rho_\beta^{Q_n}.
\end{equation}
Then \(\sigma_\beta\) is a Gaussian product state and
\begin{equation}
        0
        \le
        F_\beta(\sigma_\beta)-F_\beta(\rho_\beta)
        \le
        \sqrt{\frac{2m}{D}}
        =
        \sqrt{\frac{2\log_2 d_{\rm loc}}{D}} .
        \label{eq:fermionic-free-energy-product-bound}
\end{equation}
\end{theorem}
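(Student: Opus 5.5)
The plan mirrors the ground-state argument, with the entropy term controlled by subadditivity. The lower bound \(F_\beta(\sigma_\beta)-F_\beta(\rho_\beta)\ge0\) is immediate from the Gibbs variational principle: \(\rho_\beta\) minimizes \(F_\beta\) over all states, since \(F_\beta(\omega)-F_\beta(\rho_\beta)=\beta^{-1}D(\omega\|\rho_\beta)\ge0\). Gaussianity of \(\sigma_\beta\) follows because each marginal \(\rho_\beta^{Q_i}\) of the Gaussian state \(\rho_\beta\) is Gaussian with covariance \(\Pi_i M_\beta\Pi_i\). Their (even) product is then the Gaussian state with block-diagonal covariance \(\bigoplus_i \Pi_i M_\beta\Pi_i\), as in \eqref{eq:product-covariance}.

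For the upper bound, I split
\begin{equation}
F_\beta(\sigma_\beta)-F_\beta(\rho_\beta)
=\bigl[\tr(H\sigma_\beta)-\tr(H\rho_\beta)\bigr]
-\beta^{-1}\bigl[S(\sigma_\beta)-S(\rho_\beta)\bigr].
\end{equation}
The entropy bracket is nonnegative by subadditivity of von Neumann entropy, \(S(\rho_\beta)\le\sum_i S(\rho_\beta^{Q_i})=S(\sigma_\beta)\). For the mode partition this is legitimate: \(\rho_\beta\) is even, so under the Jordan--Wigner identification of \(\mathcal F_{nm}\) with \(\bigotimes_i \mathcal F_m\) its fermionic marginals coincide with the ordinary partial traces. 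The fermionic product of even marginals then coincides with the ordinary tensor product, which has entropy \(\sum_i S(\rho_\beta^{Q_i})\). Hence the entropy term contributes with a favourable sign, and it suffices to bound the energy difference.

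The energy difference is handled exactly as in the proof of Theorem~\ref{thm:fermionic-product-state-approximation}, with \(\rho_0\) replaced by \(\rho_\beta\). By \eqref{eq:edge-expectation-covariance}, the difference on edge \((i,j)\) equals \(\tr(K_{i,j}^T X_{i,j}^{(\beta)})\), because \(\sigma_\beta\) has vanishing inter-site covariance blocks. This is bounded by \(\|X_{i,j}^{(\beta)}\|_2\) using \(\|K_{i,j}\|_2\le\|K_{i,j}\|_1=\|H_{i,j}\|\le1\). Theorem~\ref{thm:monogamy} applied at each vertex to \(M_\beta\) gives \(\sum_{j\in N(i)}\|X_{i,j}^{(\beta)}\|_2^2\le2m\). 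Summing over vertices and using \(|E|=nD/2\) gives \(\mathbb E_{(i,j)\in E}\|X_{i,j}^{(\beta)}\|_2^2\le 2m/D\), and Jensen then yields the bound \(\sqrt{2m/D}\).

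The main obstacle is the subadditivity step, specifically justifying the fermionic tensor-product identification. Once one notes that even states make the Jordan--Wigner marginals canonical, the step is standard, and the rest reuses the ground-state argument verbatim. No pure-state reduction is needed here, since \(\sigma_\beta\) itself is the claimed approximant.
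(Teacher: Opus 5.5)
Your proposal is correct and follows the paper's proof essentially step for step: the Gibbs variational principle gives the lower bound, subadditivity shows the entropy term only helps, and the energy difference is controlled by the edge-wise covariance bound from Theorem~\ref{thm:monogamy} together with $\|K_{i,j}\|_2\le\|K_{i,j}\|_1\le1$ and Jensen. Your extra justifications are details the paper leaves implicit: the relative-entropy identity for the lower bound, the block-diagonal covariance for Gaussianity of $\sigma_\beta$, and the observation that evenness makes the Jordan--Wigner partial traces and the fermionic product canonical.
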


\begin{proof}
The lower bound follows from the Gibbs variational principle, since
\(\rho_\beta\) minimizes \(F_\beta\).
Thus, the nontrivial step is to prove the upper bound.  Let \(M^{(\beta)}\) be the covariance matrix of
\(\rho_\beta\).  For an oriented edge \((i,j)\in E\), set
\begin{equation}
        X_{i,j}^{(\beta)}
        :=
        \Pi_i M^{(\beta)}\Pi_j .
\end{equation}
As in the proof of
Theorem~\ref{thm:fermionic-product-state-approximation}, the operator
normalization \(\|H_{i,j}\|\le1\) implies
\begin{equation}
        \|K_{i,j}\|_2\le1.
\end{equation}
Using the covariance convention \eqref{eq:majorana-covariance}, we have
\begin{equation}
        \tr(\rho_\beta H_{i,j})
        =
        -\tr\!\left(K_{i,j}^T X_{i,j}^{(\beta)}\right),
\end{equation}
and therefore
\begin{equation}
        \left|
        \tr\!\left(K_{i,j}^T X_{i,j}^{(\beta)}\right)
        \right|
        \le
        \|X_{i,j}^{(\beta)}\|_2 .
\end{equation}
Applying Theorem~\ref{thm:monogamy} at each vertex and averaging over the
edges gives
\begin{equation}
        \mathbb E_{(i,j)\in E}
        \left|
        \tr\!\left(K_{i,j}^T X_{i,j}^{(\beta)}\right)
        \right|
        \le
        \sqrt{\frac{2m}{D}} .
        \label{eq:gibbs-average-edge-bound}
\end{equation}

Now compare \(\rho_\beta\) with
\begin{equation}
        \sigma_\beta
        =
        \rho_\beta^{Q_1}\otimes\cdots\otimes\rho_\beta^{Q_n}.
\end{equation}
Passing from \(\rho_\beta\) to \(\sigma_\beta\) keeps the one-site marginals
and deletes the covariance blocks between distinct sites.  Therefore,
for every edge \((i,j)\in E\),
\begin{equation}
        \tr(\sigma_\beta H_{i,j})-\tr(\rho_\beta H_{i,j})
        =
        \tr\!\left(K_{i,j}^T X_{i,j}^{(\beta)}\right).
\end{equation}
Averaging over edges and using \eqref{eq:gibbs-average-edge-bound}, we obtain
\begin{equation}
        \tr(H\sigma_\beta)-\tr(H\rho_\beta)
        \le
        \sqrt{\frac{2m}{D}} .
\end{equation}

It remains to use the entropy term.  Since \(\sigma_\beta\) is the product of
the one-site marginals of \(\rho_\beta\), subadditivity gives
\begin{equation}
        S(\sigma_\beta)
        =
        \sum_{i=1}^n S(\rho_\beta^{Q_i})
        \ge
        S(\rho_\beta).
\end{equation}
Hence
\begin{equation}
\begin{aligned}
        F_\beta(\sigma_\beta)-F_\beta(\rho_\beta)
        &=
        \tr(H\sigma_\beta)-\tr(H\rho_\beta)
        -
        \beta^{-1}\bigl(S(\sigma_\beta)-S(\rho_\beta)\bigr)  \\
        &\le
        \tr(H\sigma_\beta)-\tr(H\rho_\beta)  \\
        &\le
        \sqrt{\frac{2m}{D}} .
\end{aligned}
\end{equation}
This proves Eq.~\eqref{eq:fermionic-free-energy-product-bound}.
\end{proof}

\section{Exchangeability and fermionic Gaussian de Finetti bounds}
\label{sec:definetti}

We now turn to the role of exchangeability.  In ordinary quantum de Finetti
theory, permutation symmetry, together with extendibility to a sufficiently
large number of copies, leads to approximate product structure.  In the
fermionic Gaussian setting, this principle admits a particularly direct and
rigid formulation at the level of covariance matrices.  Graded permutation
symmetry forces all off-diagonal covariance blocks between distinct copies to
coincide, while the bona fide condition for the full \(n\)-copy covariance
matrix suppresses their magnitude to order \(1/n\).

The proof is entirely one-particle.  We first state the finite de Finetti
theorem.  We then write the exchangeable covariance matrix in block form and
diagonalize its copy-space part into the symmetric direction and its orthogonal
complement.  The two resulting one-copy covariance matrices differ by exactly
\(nY\), where \(Y\) is the common off-diagonal covariance block.  This gives the
covariance suppression bound, which is then converted into a trace-norm bound
between Gaussian states.

Let \(A\) be a fermionic system with \(m_A\) modes, equivalently \(N_A=2m_A\)
Majorana operators.  We write \(A_1,\ldots,A_k\) for \(k\) identical copies of
\(A\), with the global ordering of copies fixed once and for all.
For \(\pi\in S_n\), let \(U_\pi\) be the graded fermionic copy-permutation
unitary determined by
\begin{equation}
        U_\pi^\dagger\gamma_a^{(i)}U_\pi
        =
        \gamma_a^{(\pi(i))}.
        \label{eq:graded-copy-permutation}
\end{equation}
All permutation invariance in this section refers to these fermionic
permutation unitaries, rather than to ordinary tensor-factor permutations
after a Jordan--Wigner identification.

\begin{definition}[\(n\)-exchangeability]
Let \(1\le k\le n\).  A state \(\rho_{A_1, \cdots , A_k}\) is called
\(n\)-exchangeable if there exists a physical state
\(\widetilde\rho_{A_1\cdots A_n}\) which satisfies
\(U_\pi\widetilde\rho U_\pi^\dagger=\widetilde\rho\) for every \(\pi\in S_n\)
and whose marginal on \(A_1\cdots A_k\) is
\(\rho_{A_1, \cdots , A_k}\).  A state is infinitely exchangeable if it is
\(n\)-exchangeable for every \(n\ge k\).
\end{definition}

The main result of this section is the following finite de Finetti bound.

\begin{theorem}[Finite fermionic Gaussian de Finetti bound]
\label{thm:exchangeable-gaussian-definetti}
Let \(\rho_{A_1, \cdots , A_k}\) be an \(n\)-exchangeable fermionic Gaussian state,
and let \(\rho_A\) be its one-copy marginal.  Then there exists a probability
measure \(\mu\) on the set of physical states of \(A\) such that
\begin{equation}
        \trnorm{
        \rho_{A_1, \cdots , A_k}
        -
        \int \omega^{\otimes k}\,d\mu(\omega)
        }
        \le
        \frac{4m_A(k-1)}{n}.
        \label{eq:finite-definetti-bound}
\end{equation}
In fact, one may take \(\mu=\delta_{\rho_A}\), so that the approximating
de Finetti state is simply \(\rho_A^{\otimes k}\).
\end{theorem}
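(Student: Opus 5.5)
Here is the plan. We work entirely at the level of covariance matrices and then convert to trace norm via \eqref{eq:gaussian-covariance-lipschitz-prelim}. Let \(\widetilde\rho\) be the \(S_n\)-invariant extension on \(A_1\cdots A_n\), with Majorana covariance matrix \(\widetilde M\); we do not need \(\widetilde\rho\) to be Gaussian, only physical, so \(\widetilde M\) is real antisymmetric with \(\|\widetilde M\|_\infty\le1\).

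\textbf{Step 1: the symmetry forces a common off-diagonal block.} By \eqref{eq:graded-copy-permutation}, \(\tr(\widetilde\rho\, i\gamma^{(j)}_b\gamma^{(i)}_a)=\tr(\widetilde\rho\, i\gamma^{(\pi(j))}_b\gamma^{(\pi(i))}_a)\) for every \(\pi\). Hence every diagonal block equals a fixed \(M_A\). Since \(S_n\) acts transitively on ordered pairs of distinct indices, every off-diagonal block \(\Pi_{A_i}\widetilde M\Pi_{A_j}\) with \(i\ne j\) equals a fixed \(Y\). Antisymmetry of \(\widetilde M\) then gives \(Y=-Y^T\). Thus
\[
\widetilde M=\one_n\otimes(M_A-Y)+J_n\otimes Y,
\]
where \(J_n\) is the all-ones matrix. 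Because the marginal of \(\widetilde\rho\) on the first \(k\) copies is the given Gaussian state \(\rho_{A_1\cdots A_k}\), its covariance matrix is \(M_k=\one_k\otimes(M_A-Y)+J_k\otimes Y\), and that state is exactly \(\rho_{M_k}\).

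\textbf{Step 2: covariance suppression.} We diagonalize the copy space. On the symmetric vector \(\one/\sqrt n\), \(\widetilde M\) acts as \(M_A+(n-1)Y\); on its orthogonal complement it acts as \(M_A-Y\). Both must have operator norm at most \(1\), since they are compressions of \(\widetilde M\) to invariant subspaces. Their difference is \(nY\), so \(\|Y\|_\infty\le 2/n\).

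If time permits, I would tighten this using the positivity \(\one+iM\succeq0\) on each block together with antisymmetry, but \(2/n\) already suffices. For the trace norm on the \(N_A\)-dimensional space we get \(\|Y\|_1\le N_A\|Y\|_\infty\le 4m_A/n\).

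\textbf{Step 3: conversion to states.} The product \(\rho_A^{\otimes k}\), in the graded sense fixed in the preliminaries, has covariance matrix \(\one_k\otimes M_A\). Therefore
\[
M_k-\one_k\otimes M_A=(J_k-\one_k)\otimes Y .
\]
The eigenvalues of \(J_k-\one_k\) are \(k-1\) (once) and \(-1\) (\(k-1\) times), so \(\|J_k-\one_k\|_1=2(k-1)\). By multiplicativity of the trace norm under tensor products, \(\|M_k-\one_k\otimes M_A\|_1=2(k-1)\|Y\|_1\le 8m_A(k-1)/n\). Applying \eqref{eq:gaussian-covariance-lipschitz-prelim} yields the bound \(4m_A(k-1)/n\) with \(\mu=\delta_{\rho_A}\).

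\textbf{Main obstacle.} The delicate point is the constant in Step 2, namely getting \(\|Y\|_\infty\le2/n\) rather than something weaker. It also requires checking that the diagonalization is legitimate: \(\widetilde M\) commutes with \(P\otimes\one\), where \(P\) is the projector onto the symmetric copy vector, so its norm is the maximum over the two sectors. A secondary subtlety is confirming that the graded copy permutation induces exactly the plain block permutation of the covariance matrix, with no sign twists. This follows because \(U_\pi\) maps Majorana operators to Majorana operators without signs, so bilinears transform covariantly.
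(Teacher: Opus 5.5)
Your proposal is correct and follows the paper's proof essentially step for step: the copy-space decomposition into the symmetric vector and its orthogonal complement, the identity $nY=(M_A+(n-1)Y)-(M_A-Y)$, the factor $2(k-1)$ from the spectrum of $J_k-\one_k$, and the Gaussian covariance-to-state Lipschitz bound. The differences are cosmetic. You first obtain the dimension-free bound $\|Y\|_\infty\le 2/n$ and then pay the factor $N_A=2m_A$, whereas the paper applies the triangle inequality directly in trace norm; both reach the same $4m_A/n$. You also work with the possibly non-Gaussian extension directly, rather than first replacing it by the Gaussian state with the same covariance matrix.
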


The case \(k=1\) is immediate.  Hence, in the covariance analysis below, we
assume \(2\le k\le n\).
The remainder
of this section proves
Theorem~\ref{thm:exchangeable-gaussian-definetti}.  For fermionic Gaussian
states, exchangeability can be checked directly at the covariance level.
Given any permutation-invariant extension \(\widetilde\rho\), let
\(\widetilde M\) be its covariance matrix and let \(\rho_{\widetilde M}\) be
the unique Gaussian state with this covariance matrix.  The matrix
\(\widetilde M\) is permutation invariant, and uniqueness implies that
\(\rho_{\widetilde M}\) is permutation invariant as well.  Its \(k\)-copy
marginal has the covariance matrix of \(\rho_{A_1, \cdots , A_k}\); since the
prescribed marginal is Gaussian, uniqueness shows that this marginal of
\(\rho_{\widetilde M}\) is precisely \(\rho_{A_1, \cdots, A_k}\).  Thus an
arbitrary permutation-invariant extension may be replaced by a
permutation-invariant Gaussian extension with the same marginal.  Conversely, every admissible
permutation-invariant covariance extension determines such a Gaussian state
extension.

\subsection{Permutation-invariant covariance matrices}
\label{subsec:exchangeable-sector-suppression}

We now translate graded permutation symmetry into the rigid block
structure that drives the quantitative de Finetti bound.
Let \(\rho_{A_1, \cdots , A_k}\) be permutation-invariant and Gaussian, and let
\(M^{(k)}\) be its Majorana covariance matrix.  Since all copies are identical,
permutation invariance implies that all diagonal blocks are equal to the
one-copy covariance matrix \(M_A\), and that all off-diagonal copy-copy blocks
are described by a single matrix \(Y\).

There is one fermionic point to note.  A priori, the two-copy covariance matrix
has the block form
\begin{equation}
        \begin{pmatrix}
        M_A & Y\\
        -Y^T & M_A
        \end{pmatrix}.
\end{equation}
However, for a permutation-invariant fermionic extension, the off-diagonal
block \(Y\) is itself antisymmetric.  Indeed, with the covariance convention
\begin{equation}
        M_{p,q}
        =
        \tr\!\left(i\gamma_q\gamma_p\rho\right)
        \qquad (p\ne q),
\end{equation}
we have
\begin{equation}
        Y_{A,B}
        =
        \tr\!\left(
        i\gamma^{(2)}_b\gamma^{(1)}_a\rho
        \right).
\end{equation}
Exchanging the two copies gives
\begin{equation}
        Y_{A,B}
        =
        \tr\!\left(
        i\gamma^{(1)}_b\gamma^{(2)}_a\rho
        \right)
        =
        -
        \tr\!\left(
        i\gamma^{(2)}_a\gamma^{(1)}_b\rho
        \right)
        =
        -Y_{b,a},
\end{equation}
where the minus sign comes from the canonical anticommutation relations for
Majorana operators on distinct fermionic copies.  Hence \(Y^T=-Y\).  This is
the covariance-matrix version of the antisymmetry condition for fermionic
symmetric extensions; see, for example, the covariance characterization of
fermionic extendibility in \citep{NegariSalekExtendibility}.  Consequently
\(-Y^T=Y\), and the permutation-invariant \(k\)-copy covariance matrix may be
written as
\begin{equation}
        M^{(k)}
        =
        \begin{pmatrix}
        M_A & Y & \cdots & Y\\
        Y & M_A & \cdots & Y\\
        \vdots & \vdots & \ddots & \vdots\\
        Y & Y & \cdots & M_A
        \end{pmatrix}.
        \label{eq:exchangeable-block-matrix}
\end{equation}
Here both \(M_A\) and \(Y\) are real antisymmetric matrices.

The proof therefore reduces to showing that \(n\)-exchangeability forces
\(\trnorm{Y}=O(m_A/n)\).  Once this is known, the covariance matrix
\eqref{eq:exchangeable-block-matrix} is close to the product covariance
\(M_A^{\oplus k}\), and the covariance-to-state Lipschitz bound for fermionic
Gaussian states gives \eqref{eq:finite-definetti-bound}.

Let \(M^{(n)}\) be a permutation-invariant \(n\)-copy covariance matrix
extending \(M^{(k)}\).  Equivalently,
\begin{equation}
        M^{(n)}
        =
        \one_n\otimes M_A
        +
        (J_n-\one_n)\otimes Y,
\end{equation}
where \(J_n\) is the \(n\times n\) all-ones matrix and \(\one_n\) is the
\(n\times n\) identity.  Define
\begin{equation}
        \ket{+}
        :=
        \frac1{\sqrt n}(1,\ldots,1)^T,
        \qquad
        P_+:=\ket{+}\!\bra{+},
        \qquad
        P_-:=\one_n-P_+ .
\end{equation}
The vector \(\ket{+}\) spans the symmetric copy-space sector.  Since
\(J_n-\one_n\) has eigenvalue \(n-1\) on this sector and eigenvalue \(-1\) on
its orthogonal complement, the covariance matrix decomposes as
\begin{equation}
        M^{(n)}
        =
        P_+\otimes\bigl(M_A+(n-1)Y\bigr)
        +
        P_-\otimes(M_A-Y).
        \label{eq:exchangeable-sector-decomposition}
\end{equation}
Thus the \(n\)-copy bona fide condition reduces to two one-copy bona fide
conditions.

\begin{proposition}[Covariance criterion for exchangeability]
\label{prop:exchangeability-criterion}
Let \(n\ge2\).  The block-constant covariance matrix \(M^{(n)}\) is admissible
if and only if
\begin{equation}
        \one+i\bigl(M_A+(n-1)Y\bigr)\succeq0,
        \qquad
        \one+i(M_A-Y)\succeq0 .
        \label{eq:exchangeability-criterion}
\end{equation}
Equivalently, the two real antisymmetric matrices
\(M_A+(n-1)Y\) and \(M_A-Y\) are bona fide fermionic covariance matrices.
\end{proposition}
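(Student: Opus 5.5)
The plan is to use the sector decomposition \eqref{eq:exchangeable-sector-decomposition} to block-diagonalize the Hermitian matrix \(\one+iM^{(n)}\). The only nontrivial point is that the orthogonal change of basis in copy space acts compatibly with the Majorana index.

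Since \(P_+\) and \(P_-\) are real orthogonal projections on \(\mathbb R^n\) with \(P_++P_-=\one_n\), we can write \(\one_{nN_A}=P_+\otimes\one_{N_A}+P_-\otimes\one_{N_A}\). This gives
\begin{equation}
\one+iM^{(n)}
=
P_+\otimes\bigl(\one+i(M_A+(n-1)Y)\bigr)
+
P_-\otimes\bigl(\one+i(M_A-Y)\bigr).
\end{equation}
Before using this, we check \eqref{eq:exchangeable-sector-decomposition} itself. Since \(J_n-\one_n=(n-1)P_+-P_-\), we have
\[
\one_n\otimes M_A+(J_n-\one_n)\otimes Y=P_+\otimes(M_A+(n-1)Y)+P_-\otimes(M_A-Y).
\]

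Next, choose a real orthogonal \(O\in O(n)\) whose first column is \(\ket{+}\) and whose remaining columns span \(\operatorname{ran}P_-\). Conjugating by \(O\otimes\one_{N_A}\) turns the display above into the block-diagonal matrix
\[
\bigl(\one+i(M_A+(n-1)Y)\bigr)\oplus\bigl(\one_{n-1}\otimes(\one+i(M_A-Y))\bigr).
\]
Congruence by an orthogonal matrix preserves positive semidefiniteness. Hence \(\one+iM^{(n)}\succeq0\) holds if and only if both one-copy blocks are positive semidefinite; the second block appears with multiplicity \(n-1\ge1\), which is where the hypothesis \(n\ge2\) enters. This is exactly \eqref{eq:exchangeability-criterion}.

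It remains to address the real antisymmetry parts of admissibility. \(M^{(n)}\) is real antisymmetric, because \(M_A\) and \(Y\) are antisymmetric and \(J_n-\one_n\) is symmetric. Likewise \(M_A+(n-1)Y\) and \(M_A-Y\) are real antisymmetric. So the bona fide condition \eqref{eq:covariance-positivity} is the only constraint, and this gives the equivalence with the two matrices being valid covariance matrices.

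I do not expect a serious obstacle. The one point needing care is that the decomposition uses the block form \eqref{eq:exchangeable-block-matrix}, which in turn relies on \(-Y^T=Y\). That was derived earlier from graded permutation invariance, so we simply invoke it. Without it, the lower-triangular blocks would be \(-Y^T\) rather than \(Y\), and the Kronecker form would fail.
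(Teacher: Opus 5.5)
Your proof is correct and follows the same route as the paper: rewrite $\one+iM^{(n)}$ through the sector decomposition into $P_+\otimes(\cdot)+P_-\otimes(\cdot)$, then use the orthogonality of the two copy-space sectors to reduce positivity to the two one-copy conditions. Your explicit orthogonal change of basis in copy space, and your remark that $n\ge2$ makes the $P_-$ sector nonempty, spell out what the paper's short argument leaves implicit.
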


\begin{proof}
A real antisymmetric matrix \(M\) is a valid Majorana covariance matrix exactly
when \(\one+iM\succeq0\).  Applying this condition to \(M^{(n)}\) and using
\eqref{eq:exchangeable-sector-decomposition}, we obtain
\begin{equation}
\begin{aligned}
        \one+iM^{(n)}
        &=
        P_+\otimes
        \Bigl(\one+i\bigl(M_A+(n-1)Y\bigr)\Bigr)  \\
        &\quad+
        P_-\otimes
        \Bigl(\one+i(M_A-Y)\Bigr).
\end{aligned}
\end{equation}
Since \(P_+\) and \(P_-\) have orthogonal ranges, this matrix is positive if
and only if its restrictions to the two sectors are positive.  This is exactly
\eqref{eq:exchangeability-criterion}.
\end{proof}

We now extract the quantitative consequence.  This is the step where
exchangeability produces the small parameter \(1/n\).

\begin{lemma}[Suppression of the off-diagonal covariance block]
\label{lem:exchangeable-block-bound}
Let \(2\le k\le n\).  If \(M^{(k)}\) is the \(k\)-copy marginal of an admissible
permutation-invariant covariance matrix on \(n\) copies, then
\begin{equation}
        \trnorm{Y}
        \le
        \frac{4m_A}{n}.
        \label{eq:Y-trace-bound}
\end{equation}
\end{lemma}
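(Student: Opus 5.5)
Proposition~\ref{prop:exchangeability-criterion} says that both
\(M_+:=M_A+(n-1)Y\) and \(M_-:=M_A-Y\) are admissible covariance matrices. The plan is to subtract them and use a trace-norm bound on the difference.

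First, write the difference of the two sector matrices:
\begin{equation}
        M_+-M_-=nY,
        \qquad\text{so}\qquad
        Y=\frac{1}{n}\bigl(M_+-M_-\bigr).
\end{equation}
Second, bound each term. Both \(M_\pm\) are real antisymmetric \(N_A\times N_A\) matrices with all singular values at most one. Hence
\begin{equation}
        \trnorm{M_\pm}\le N_A=2m_A .
\end{equation}
Third, apply the triangle inequality:
\begin{equation}
        \trnorm{Y}\le\frac1n\bigl(\trnorm{M_+}+\trnorm{M_-}\bigr)\le\frac{4m_A}{n},
\end{equation}
which is \eqref{eq:Y-trace-bound}.

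The only step needing care is justifying that the \(n\)-copy extension has the block-constant form \(\one_n\otimes M_A+(J_n-\one_n)\otimes Y\) with the same \(Y\) as in \(M^{(k)}\). This needs two facts. First, graded permutation invariance makes every pair of distinct copies carry the same block, and the copy-exchange computation above shows this block is antisymmetric, so the \((i,j)\) and \((j,i)\) blocks are both \(Y\) rather than \(Y\) and \(-Y^T\). Second, the \(k\)-copy marginal is the compression, so its blocks agree with those of \(M^{(n)}\). Once this is in place, admissibility of \(M^{(n)}\) gives admissibility of \(M_\pm\) through Proposition~\ref{prop:exchangeability-criterion}. Note that \(n\ge2\) is used there, so that \(P_-\neq0\) and \(M_-\) is actually constrained.

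Alternatively, one can avoid the difference trick by working with \(M_+\) alone. This gives \((n-1)\trnorm{Y}\le\trnorm{M_+}+\trnorm{M_A}\le 4m_A\), hence a bound of \(4m_A/(n-1)\), which is slightly weaker. The difference \(M_+-M_-\) is what yields the clean factor \(1/n\).
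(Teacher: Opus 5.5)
Your proof is correct and is essentially the paper's argument: the paper also invokes Proposition~\ref{prop:exchangeability-criterion}, bounds $\trnorm{M_A+(n-1)Y}$ and $\trnorm{M_A-Y}$ by $2m_A$, and applies the triangle inequality to $nY=(M_A+(n-1)Y)-(M_A-Y)$. Your extra remarks on the block-constant form of the extension and the weaker $4m_A/(n-1)$ bound obtained from $M_+$ alone are also correct.
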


\begin{proof}
By Proposition~\ref{prop:exchangeability-criterion}, the matrices
\(M_A+(n-1)Y\) and \(M_A-Y\) are bona fide covariance matrices.  For any bona
fide Majorana covariance matrix \(C\), all 
singular values of \(C\) are at most
one.  Since the 
matrices here have size \(2m_A\times2m_A\),
\begin{equation}
        \trnorm{M_A+(n-1)Y}\le 2m_A,
        \qquad
        \trnorm{M_A-Y}\le 2m_A.
\end{equation}
Moreover,
\begin{equation}
        nY
        =
        \bigl(M_A+(n-1)Y\bigr)-(M_A-Y).
\end{equation}
Therefore, 
the triangle inequality gives
\begin{equation}
        n\trnorm{Y}
        \le
        \trnorm{M_A+(n-1)Y}
        +
        \trnorm{M_A-Y}
        \le
        4m_A.
\end{equation}
This proves Eq.~\eqref{eq:Y-trace-bound}.
\end{proof}

\begin{remark}[Sharpness of the \(1/n\) suppression]
For \(n\ge2\), the \(1/n\) scaling in
Lemma~\ref{lem:exchangeable-block-bound} is sharp
already at the one-mode covariance-matrix level.  Let
\begin{equation}
        J_{\rm f}
        =
        \begin{psmallmatrix}
        0 & 1\\
        -1 & 0
        \end{psmallmatrix},
\end{equation}
and consider a covariance matrix with
\begin{equation}
        Y=\frac{2J_{\rm f}}{n},
        \qquad
        M_A=-J_{\rm f}+\frac{2J_{\rm f}}{n}.
\end{equation}
Then the two sector covariance matrices appearing in
Proposition~\ref{prop:exchangeability-criterion} are
\begin{equation}
        M_A+(n-1)Y
        =
        J_{\rm f},
        \qquad
        M_A-Y
        =
        -J_{\rm f}.
\end{equation}
Both \(J_{\rm f}\) and \(-J_{\rm f}\) are admissible one-mode Majorana
covariance matrices.  Hence, by
Proposition~\ref{prop:exchangeability-criterion}, the corresponding
permutation-invariant \(n\)-copy covariance matrix is admissible.  Its common
off-diagonal block satisfies
\begin{equation}
        \trnorm{Y}
        =
        \frac{2}{n}\trnorm{J_{\rm f}}
        =
        \frac{4}{n}.
\end{equation}
For \(m_A=1\), Lemma~\ref{lem:exchangeable-block-bound} gives the upper bound
\(\trnorm{Y}\le4/n\).  Thus both the \(1/n\) scaling and the constant in the
lemma are attained by this one-mode example.
\end{remark}

\subsection{Proof of the de Finetti bound and infinite exchangeability}
\label{subsec:exchangeable-definetti-proof}

It remains to pass from covariance suppression to the trace-norm bound for
Gaussian states.  The product Gaussian state \(\rho_A^{\otimes k}\) has
covariance matrix \(M_A^{\oplus k}\), obtained from
\eqref{eq:exchangeable-block-matrix} by deleting all off-diagonal blocks.
Therefore
\begin{equation}
        M^{(k)}-M_A^{\oplus k}
        =
        \begin{pmatrix}
        0 & Y & \cdots & Y\\
        Y & 0 & \cdots & Y\\
        \vdots & \vdots & \ddots & \vdots\\
        Y & Y & \cdots & 0
        \end{pmatrix}.
\end{equation}
Let
\begin{equation}
        \ket{+_k}
        =
        \frac1{\sqrt k}(1,\ldots,1)^T,
        \qquad
        P_+^{(k)}=\ket{+_k}\!\bra{+_k},
        \qquad
        P_-^{(k)}=\one_k-P_+^{(k)} .
\end{equation}
The same sector decomposition gives
\begin{equation}
        M^{(k)}-M_A^{\oplus k}
        =
        (k-1)P_+^{(k)}\otimes Y
        -
        P_-^{(k)}\otimes Y .
\end{equation}
Hence, using the orthogonality of the two sectors and multiplicativity of the
trace norm under tensor products,
\begin{equation}
        \trnorm{M^{(k)}-M_A^{\oplus k}}
        =
        (k-1)\trnorm{Y}+(k-1)\trnorm{Y}
        =
        2(k-1)\trnorm{Y}.
\end{equation}
Together with Lemma~\ref{lem:exchangeable-block-bound}, this gives
\begin{equation}
        \trnorm{M^{(k)}-M_A^{\oplus k}}
        \le
        \frac{8m_A(k-1)}{n}.
        \label{eq:covariance-trace-distance}
\end{equation}

\begin{proof}[Proof of Theorem~\ref{thm:exchangeable-gaussian-definetti}]
We use the trace-norm bound for fermionic Gaussian states recalled in
\eqref{eq:gaussian-covariance-lipschitz-prelim}: if \(M\) and \(M'\) are valid
covariance matrices, then
\begin{equation}
        \trnorm{\rho_M-\rho_{M'}}
        \le
        \frac12\trnorm{M-M'}.
        \label{eq:gaussian-covariance-lipschitz}
\end{equation}
Applying \eqref{eq:gaussian-covariance-lipschitz} with
\(M'=M_A^{\oplus k}\), and using
\eqref{eq:covariance-trace-distance}, we obtain
\begin{equation}
        \trnorm{\rho_{A_1, \cdots, A_k}-\rho_A^{\otimes k}}
        \le
        \frac12\trnorm{M^{(k)}-M_A^{\oplus k}}
        \le
        \frac{4m_A(k-1)}{n}.
\end{equation}
The product state \(\rho_A^{\otimes k}\) is obtained from the de Finetti
integral in Eq.~\eqref{eq:finite-definetti-bound} by taking
\(\mu=\delta_{\rho_A}\).  This proves the theorem.
\end{proof}

The same bound also gives the infinite exchangeability statement.

\begin{corollary}[Infinite exchangeability implies product structure]
\label{cor:infinite-exchangeability-product}
Let \(k\ge2\), and let \(\rho_{A_1, \cdots , A_k}\) be a permutation-invariant fermionic Gaussian
state with covariance matrix \eqref{eq:exchangeable-block-matrix}.  If
\(\rho_{A_1, \cdots, A_k}\) is infinitely exchangeable, then \(Y=0\).  Hence
\begin{equation}
        \rho_{A_1, \cdots , A_k}
        =
        \rho_A^{\otimes k}.
        \label{eq:infinite-exchangeability-product}
\end{equation}
\end{corollary}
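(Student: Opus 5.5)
The plan is to apply Lemma~\ref{lem:exchangeable-block-bound} for every \(n\ge k\) and let \(n\to\infty\). Infinite exchangeability means that for each \(n\ge k\) there is a permutation-invariant physical extension \(\widetilde\rho_{A_1\cdots A_n}\) of \(\rho_{A_1,\cdots,A_k}\). As explained before Section~\ref{subsec:exchangeable-sector-suppression}, we may replace it by the Gaussian state with the same covariance matrix. Its covariance matrix \(M^{(n)}\) is then admissible and permutation invariant, so by the graded-symmetry argument it has the block-constant form \(\one_n\otimes M_A+(J_n-\one_n)\otimes Y_n\) with antisymmetric \(Y_n\).

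The one point to check is that the block \(Y_n\) is the same matrix \(Y\) appearing in \eqref{eq:exchangeable-block-matrix}, independent of \(n\). This holds because \(M^{(k)}\) is the compression of \(M^{(n)}\) to the first \(k\) copies by \eqref{eq:restricted-covariance}. The off-diagonal block between copies \(1\) and \(2\) of \(M^{(n)}\) therefore equals that of \(M^{(k)}\), which is \(Y\); here we use \(k\ge2\), so such a block exists in \(M^{(k)}\). The diagonal blocks likewise agree with \(M_A\).

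Lemma~\ref{lem:exchangeable-block-bound} then gives \(\trnorm{Y}\le 4m_A/n\) for every \(n\ge k\). Since \(Y\) does not depend on \(n\), letting \(n\to\infty\) forces \(\trnorm{Y}=0\), that is, \(Y=0\). The covariance matrix of \(\rho_{A_1,\cdots,A_k}\) is then \(M_A^{\oplus k}\). This is exactly the covariance matrix of the Gaussian product \(\rho_A^{\otimes k}\), obtained by deleting off-diagonal blocks as in \eqref{eq:product-covariance}. By uniqueness of the Gaussian state with a given covariance matrix, \eqref{eq:infinite-exchangeability-product} follows. Alternatively, one can let \(n\to\infty\) directly in the bound of Theorem~\ref{thm:exchangeable-gaussian-definetti}, whose right-hand side \(4m_A(k-1)/n\) tends to zero.

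There is no real obstacle. The only subtlety is justifying the \(n\)-independence of \(Y\) via consistency of marginals, which was handled above.
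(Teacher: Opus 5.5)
Your proposal is correct and follows the paper's own argument. You apply Lemma~\ref{lem:exchangeable-block-bound} for arbitrarily large \(n\) to get \(Y=0\), then use uniqueness of the Gaussian state with covariance \(M_A^{\oplus k}\); your explicit check that \(Y\) is independent of \(n\), via consistency of marginals, is a detail the paper leaves implicit, and you handle it correctly.
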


\begin{proof}
If the state is infinitely exchangeable, then
Lemma~\ref{lem:exchangeable-block-bound} applies for arbitrarily large \(n\).
Thus \(\trnorm{Y}\le4m_A/n\) for arbitrarily large \(n\), and hence \(Y=0\).
The covariance matrix is then \(M_A^{\oplus k}\), which is the covariance
matrix of the Gaussian product state \(\rho_A^{\otimes k}\).  Since fermionic
Gaussian states are determined by their covariance matrices, the state itself
is \(\rho_A^{\otimes k}\).
\end{proof}

\section{Clustering and Markov structure}
\label{sec:markov}

A basic locality property of quantum many-body states is clustering of
correlations.  In ordinary correlation clustering, one asks that connected
correlation functions of local observables decay as the supports of the
observables are separated.  In this section we use an information-theoretic
version of this idea, where the relevant quantity is the mutual information.
The advantage of mutual information is that it measures the total correlation
between two regions, rather than the correlation detected by a particular pair
of observables.

In order to proceed,
we equip \(\mathbb Z^\nu\) with the \(\ell_1\) metric and, for nonempty finite
regions, and set
\begin{equation}
        \dist(X,Y)
        :=
        \min_{x\in X,\,y\in Y}\|x-y\|_1.
        \label{eq:lattice-distance}
\end{equation}
We assume that every lattice site carries the same fixed finite number
\(m_{\rm site}\) of fermionic modes.

\begin{definition}[Exponential clustering of mutual information]
\label{def:mi-clustering}
Let \(\rho\) be a state on \(\mathbb Z^\nu\).  We say that \(\rho\) has
exponential clustering of mutual information if, for every pair of nonempty finite
reference regions \(X_0,Y_0\subset\mathbb Z^\nu\), there exist constants
\(c_{X_0,Y_0},\mu_{X_0,Y_0}>0\) such that, for all disjoint translates
\(X=x+X_0\) and \(Y=y+Y_0\),
\begin{equation}
        I_\rho(X:Y)
        \le
        c_{X_0,Y_0}e^{-\mu_{X_0,Y_0}\dist(X,Y)} .
        \label{eq:mi-clustering}
\end{equation}
Here
\begin{equation}
        I_\rho(X:Y)
        =
        S(\rho_X)+S(\rho_Y)-S(\rho_{X\cup Y}).
\end{equation}
This quantity is the mutual information.
\end{definition}

\begin{figure}[t]
\centering
\begin{tikzpicture}[
    scale=0.62,
    every node/.style={font=\small},
    site/.style={circle,fill=black,inner sep=0pt,minimum size=2.2pt},
    Afill/.style={
        fill=blue!30,
        draw=blue!70,
        thick,
        fill opacity=0.35,
        rounded corners=4pt
    },
    Cfill/.style={
        fill=red!30,
        draw=red!70,
        thick,
        fill opacity=0.35,
        rounded corners=4pt
    },
    Bfill/.style={
        fill=gray!15,
        draw=gray!50,
        thick
    }
]

\begin{scope}
\node[font=\normalsize] at (3.5,8.4) {(a) Mutual information};

\draw[thin,gray!60] (0,0) grid (7,7);
\foreach \x in {0,...,7}{
  \foreach \y in {0,...,7}{
    \node[site] at (\x,\y) {};
  }
}

\filldraw[Afill] (0.6,0.6) rectangle (3.4,3.4);

\filldraw[Cfill] (5.0,5.0) rectangle (6.7,6.7);

\node[blue!70!black,font=\large] at (2.0,2.0) {$A$};
\node[red!70!black,font=\large] at (5.85,5.85) {$C$};

\draw[<->,thick] (3.5,3.5) -- (4.95,4.95)
node[midway,above left=-1pt and -1pt] {$r$};

\node at (3.5,-1.0) {$I_\rho(A:C)$};
\end{scope}

\begin{scope}[xshift=10.8cm]
\node[font=\normalsize] at (3.5,8.4) {(b) Conditional mutual information};

\filldraw[Bfill] (0,0) rectangle (7,7);
\draw[thin,gray!60] (0,0) grid (7,7);

\foreach \x in {0,...,7}{
  \foreach \y in {0,...,7}{
    \node[site] at (\x,\y) {};
  }
}

\filldraw[Afill] (0.6,0.6) rectangle (3.4,3.4);
\filldraw[Cfill] (5.0,5.0) rectangle (6.7,6.7);

\node[blue!70!black,font=\large] at (2.0,2.0) {$A$};
\node[red!70!black,font=\large] at (5.85,5.85) {$C$};
\node[gray!70!black,font=\Large] at (1.0,5.8) {$B$};

\draw[<->,thick] (3.5,3.5) -- (4.95,4.95)
node[midway,above left=-1pt and -1pt] {$r$};

\node at (3.5,-1.0) {$I_\rho(A:C\mid B)$};
\end{scope}

\end{tikzpicture}
\caption{
A two-dimensional lattice illustrating mutual information and conditional
mutual information. In panel (a), \(A\) and \(C\) are two separated regions.
In panel (b), \(B\) denotes a finite buffer region between \(A\) and \(C\), so
that one considers the conditional mutual information \(I_\rho(A:C\mid B)\).
}
\label{fig:mi-cmi-simple}
\end{figure}

Thus the total correlation between two separated finite regions decays
exponentially with their separation.  This is a natural locality assumption in
quantum many-body theory.  It is stronger than asking for decay of a single
connected correlation function, since small mutual information controls all
connected correlations between bounded observables supported in the two
regions.  Exponential decay of connected correlations is standard for unique
gapped ground states of short-range Hamiltonians, but a spectral gap alone does
not
establish the mutual-information clustering assumed here
\citep{HastingsKoma,nachtergaele2005liebrobinsonboundsexponential}.
Exponential mutual-information clustering is known for several classes of
Gibbs states
\citep{bluhm2021exponentialdecaymutualinformation,
bluhm2024strongdecaycorrelations}.

We now consider a conditional version of the same locality question.  Let
\(A,B,C\) be pairwise disjoint finite regions, with \(B\) separating \(A\) from
\(C\).  The conditional mutual information is
\begin{equation}
        I_\rho(A:C\mid B)
        =
        S(\rho_{A\cup B})
        +
        S(\rho_{B\cup C})
        -
        S(\rho_B)
        -
        S(\rho_{A\cup B\cup C}) .
\end{equation}
Unlike \(I_\rho(A:C)\), this quantity does not measure the total marginal
correlation between \(A\) and \(C\).  Instead, it measures the part of the
correlation between \(A\) and \(C\) that remains after the degrees of freedom
in the intermediate region \(B\) have been retained.

This distinction is important.  In general quantum states, mutual information
and conditional mutual information are logically independent.  The mutual
information \(I_\rho(A:C)\) may be large while \(I_\rho(A:C\mid B)\) vanishes,
if all correlations between \(A\) and \(C\) are mediated through \(B\).
Conversely, \(I_\rho(A:C)\) may vanish while conditioning on \(B\) creates
correlations between \(A\) and \(C\).  Thus decay of ordinary mutual information
does not by itself imply decay of conditional mutual information for general
quantum states.

The significance of conditional mutual information is that it characterizes
quantum Markov structure.  The identity \(I_\rho(A:C\mid B)=0\) is equivalent
to exact recoverability of \(\rho_{A\cup B\cup C}\) from \(\rho_{A\cup B}\) by
a quantum channel acting only on \(B\) \citep{HaydenJozsaPetzWinter}.
Moreover, the recoverability theorem of Fawzi and Renner, together with its
universal recovery refinements, shows that small conditional mutual information
implies that \(\rho_{A\cup B\cup C}\) is close to a recovered state
\cite{FawziRenner,SutterTomamichelHarrow,
JungeRennerSutterWildeWinter,WildeRecoverability} 
\begin{equation}
        (\id_A\otimes\mathcal R_{B\to B\cup C})(\rho_{A\cup B}).
\end{equation}
In this sense,
\(I_\rho(A:C\mid B)\) is the natural quantitative measure of the obstruction to
recovery across the tripartition \(A|B|C\).

For pure tripartite states the distinction between the two quantities
collapses: if \(\rho_{A\cup B\cup C}\) is pure, then
\(I_\rho(A:C\mid B)=I_\rho(A:C)\).  Thus, in a pure-state setting,
mutual-information clustering immediately gives the corresponding conditional
statement.  For mixed states, and in particular for thermal states, this
identity is unavailable.  Conditional mutual information and approximate Markov
properties have therefore been studied separately for Gibbs states, including
one-dimensional Gibbs states, high-temperature Gibbs states, and more recent
finite-temperature settings
\citep{Intensive,KatoBrandao,
KuwaharaKatoBrandao,
kuwahara2024clusteringconditional,
kato2025clusteringconditionalmutual}.

The main result of this section is a conditional Markov bound for fermionic
Gaussian states with exponentially clustering mutual information.  In the
Gaussian setting, and for fixed finite outer regions, this state-level
clustering assumption implies exponential decay of the corresponding
off-diagonal Green-function blocks, after changing constants and decay rates.
Thus the proof below is formulated at the one-particle level, but no separate
one-particle locality assumption is introduced.

\begin{theorem}[Conditional mutual information from mutual-information clustering]
\label{thm:green-locality-cmi-decay}
Let \(\rho\) be a fermionic Gaussian state on \(\mathbb Z^\nu\) with
exponentially clustering mutual information in the sense of
Definition~\ref{def:mi-clustering}.  Fix nonempty finite reference regions
\(A_0\) and \(C_0\).  Let \(A=x+A_0\) and \(C=y+C_0\) be any disjoint
translates, and let \(B=B_r\) be any finite region such that \(A,B_r,C\) are
pairwise disjoint.  Write
\begin{equation}
        r=\dist(A,C)>0.
\end{equation}
Then there is a constant \(K_{A_0,C_0,\rho}>0\), independent of the translates
\(x,y\), of \(r\), and of the finite buffer \(B_r\), such that
\begin{equation}
        I_\rho(A:C\mid B_r)
        \le
        \frac{K_{A_0,C_0,\rho}}{r}.
        \label{eq:cmi-decay-main}
\end{equation}
Consequently, by recoverability, there exists a channel
\(\mathcal R_{B_r\to B_r\cup C}\) such that
\begin{equation}
        \left\|
        \rho_{A\cup B_r\cup C}
        -
        (\id_A\otimes\mathcal R_{B_r\to B_r\cup C})
        (\rho_{A\cup B_r})
        \right\|_1
        \le
        2\sqrt{I_\rho(A:C\mid B_r)}
        =O(r^{-1/2}).
\end{equation}
\end{theorem}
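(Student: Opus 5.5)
The plan has three stages. First, convert the mutual-information clustering hypothesis into decay of Green-function blocks. Second, expand the conditional mutual information as a trace series in powers of compressed Green operators. Third, bound the paths crossing the buffer; recoverability is then a one-line consequence.

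\textbf{Step 1: from mutual information to Green-function decay.} For fixed finite shapes $X_0,Y_0$ (single sites, or translates of $A_0$ and $C_0$), I will use the Gaussian lower bound
\begin{equation*}
I(X:Y)\ \ge\ \tfrac{1}{2}\|G_{X,Y}\|_2^2 .
\end{equation*}
This can be obtained from Pinsker's inequality. Since $I(X:Y)=D(\rho_{XY}\|\rho_X\otimes\rho_Y)$, it dominates $\tfrac12\|\rho_{XY}-\rho_X\otimes\rho_Y\|_1^2$. The trace norm in turn dominates the two-point expectation differences $|\tr((\rho_{XY}-\rho_X\otimes\rho_Y)\,i\gamma_x\gamma_y)|=|M_{x,y}|$ for $x\in X$, $y\in Y$. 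Summing over the finitely many pairs, with a constant depending only on $|X_0|,|Y_0|$, gives
\begin{equation*}
\|G_{x,y}\|\le c'e^{-\mu' \|x-y\|_1}\qquad\text{for single sites } x,y,
\end{equation*}
with $c',\mu'$ uniform over translates by the definition.

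\textbf{Step 2: expansion of the conditional mutual information.} Write $G_{ABC}$ as block-diagonal plus off-diagonal part. With the operator function $h(G)=-\tr G\log G$, express
\begin{equation*}
I(A:C\mid B)=h(G_{AB})+h(G_{BC})-h(G_B)-h(G_{ABC}).
\end{equation*}
Rather than a power series, which has convergence problems near spectrum $0,1$, I will use an integral representation such as $-x\log x=\int_0^\infty \bigl(\tfrac{x}{1+t}-\tfrac{x}{x+t}\bigr)\,dt$. This turns each term into resolvent traces $\tr\bigl(G_X(G_X+t)^{-1}\bigr)$. In the combination above, only terms containing both an $A$-index and a $C$-index survive the inclusion–exclusion: any path avoiding $C$ cancels between $AB$ and $ABC$, and any path avoiding $A$ cancels between $BC$ and $B$. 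So the conditional mutual information is a sum over closed Green-function paths that visit both $A$ and $C$, possibly routing through $B$.

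\textbf{Step 3: bounding the paths (main obstacle).} Each surviving path must traverse from $A$ to $C$ and back. By Step 1, each hop $x\to y$ costs $e^{-\mu'\|x-y\|_1}$, and summing over intermediate sites in $\mathbb Z^\nu$ is uniformly summable. A chain of hops reaching distance $r$ therefore carries total decay $e^{-\mu' r}$ times combinatorial factors. The difficulty is that the resolvents $(G_X+t)^{-1}$ blow up as $t\to0$ when $G_X$ has eigenvalues near $0$ or $1$. This is where the loss from exponential to $1/r$ enters. My first attempt will cut the $t$-integral at $t\sim 1/r$: for $t\ge 1/r$, geometric summation of the path expansion with $\|(G_X+t)^{-1}\|\le 1/t$ and exponentially decaying hops gives a contribution that is exponentially small in $r$ up to polynomial factors. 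For $t<1/r$, I will instead bound each of the four resolvent terms crudely via $t\,\tr\bigl((G_X+t)^{-1}\bigr)$-type estimates localized to the $O(1)$ modes near $A$ and $C$. These are the only ones whose contributions do not cancel, so $|A_0|,|C_0|$ enter the constant, giving a contribution $O(1/r)$ with constant depending on $A_0,C_0,\rho$ but not on $B_r$. Uniformity in $B_r$ should follow because all buffer sums are absorbed into the summable lattice sums. I expect making this low-$t$ cancellation rigorous and buffer-independent to be the delicate step.

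\textbf{Step 4: recoverability.} Finally, apply the Fawzi–Renner bound $I\ge -2\log F(\rho,\mathcal R(\rho_{AB}))$ together with the Fuchs–van de Graaf inequality $\|\rho-\sigma\|_1\le 2\sqrt{1-F}$. These give the trace-norm estimate $2\sqrt{I}$, which is $O(r^{-1/2})$ by Step 3.
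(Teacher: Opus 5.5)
\textbf{Gap in Step~3.} Steps~1 and~4 agree with the paper up to constants. In Step~1, Pinsker gives $\|G_{X,Y}\|_2^2\le\tfrac12 N_XN_Y\,I(X:Y)$ and halves the decay rate. In Step~2, note that $\int_0^\infty(\tfrac{x}{1+t}-\tfrac{x}{x+t})\,dt=+x\log x$, not $-x\log x$. Your reason for avoiding the power series is unfounded. The series $(1-x)\log(1-x)=-x+\sum_{k\ge2}x^k/(k(k-1))$ converges uniformly on $[0,1]$. Combined with the isospectrality of $G_X$ and $\one-G_X$, it gives the paper's $I(A:C\mid B)=-\sum_{k\ge2}\Delta_k/(k(k-1))$.

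The genuine gap is Step~3, and it affects both of your regimes.
\begin{itemize}
\item[(i)] For $t\ge1/r$ you claim an exponentially small contribution, and this fails. In a hop expansion of $(G_X+t)^{-1}$, each hop carries a lattice-summed weight of order $Q_\beta$ and each resolvent factor up to $1/t$. The $n$-hop term is therefore roughly $(Q_\beta/t)^n e^{-\beta r}$, and the series is controlled only for $t$ above an $O(1)$ constant. Truncating it leaves a remainder that contains a full resolvent, which can link $A$ to $C$ with no decay. In general the resolvent's locality (Combes--Thomas) has decay rate proportional to $t$, so at $t\sim1/r$ nothing is suppressed over distance $r$. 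Even in the repaired argument below, the bound on $1/r\lesssim t\lesssim 1$ is only of size $e^{-ctr}$, which integrates to order $1/r$, not $e^{-cr}$.
\item[(ii)] For $t<1/r$, bounding the four terms $t\,\tr(G_X+t)^{-1}$ separately is not uniform in $B_r$, since each can be as large as $N_X\ge N_B$. ``Localizing to the modes near $A$ and $C$'' requires an estimate that respects the cancellation, and you do not supply one.
\end{itemize}

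\textbf{Missing ingredients.} What you need are two bounds on the fixed-order differences $\Delta_k$, combined with a cutoff in path length rather than in $t$.
\begin{itemize}
\item[(a)] Cauchy interlacing (Lemma~\ref{lem:projection-trace}) gives $0\le\tr T^k-\tr(PTP)^k\le\rank(\one-P)$. Hence $|\Delta_k|\le\min\{N_A,N_C\}$, with no dependence on $B$; this is the source of uniformity in the buffer.
\item[(b)] Splitting each surviving closed word at a visit to $A$ and a visit to $C$ gives $|\Delta_k|\lesssim k\,|A|\,|C|\,Q_\beta^k e^{-\beta r}$. This is useful only while $k\lesssim\beta r/\log Q_\beta$.
\end{itemize}
Cut at $K\approx\beta r/(4\log Q_\beta)$. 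The short words contribute $O(e^{-\beta r/2})$. The tail contributes $\min\{N_A,N_C\}\sum_{k>K}\frac{1}{k(k-1)}=\min\{N_A,N_C\}/K=O(1/r)$.

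\textbf{Relation to your resolvent route.} Made rigorous, your route reduces to exactly this. Since $\Pi_X-G_X=G_X^T$, one has $t(G_X+t)^{-1}=\frac{t}{1+t}\sum_{k\ge0}(1+t)^{-k}(G_X^T)^k$, and its inclusion--exclusion produces precisely the $\Delta_k$. Truncating at order $K$ uniformly in $t$, the tail is at most $\min\{N_A,N_C\}(1+t)^{-K-1}$, whose $t$-integral is $\min\{N_A,N_C\}/K$. Your intuition that the $1/r$ comes from $t\lesssim 1/K$ is right in spirit. However, the proof needs both the fixed-order locality estimate (b) and the interlacing bound (a), and neither appears in your plan.
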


The proof is carried out at the covariance-matrix level.  For fermionic
Gaussian states, the entropy of a region is a trace function of the compressed
one-particle Green operator.  This allows the conditional mutual information to
be written as a series of inclusion--exclusion trace differences.  The
inclusion--exclusion structure cancels all closed one-particle words that do
not visit both \(A\) and \(C\).  The remaining words must travel from \(A\) to
\(C\) and back, and therefore inherit decay from the Green-function locality.
The short part of the series is controlled by this locality estimate, while the
long tail is controlled by a uniform rank bound.  Optimizing the cutoff gives
the \(O(r^{-1})\) bound in Theorem~\ref{thm:green-locality-cmi-decay}.

\subsection{Expansion of the conditional mutual information}
\label{subsec:cmi-expansion}

We first rewrite the conditional mutual information in a form suitable for
one-particle estimates.  Let \(\rho\) be a fermionic Gaussian state with
Majorana covariance matrix \(M\), and let
\begin{equation}
        G=\frac{\one+iM}{2},
\end{equation}
be the corresponding Majorana Green operator.  For a finite region \(X\), we
write
\begin{equation}
        G_X=\Pi_XG\Pi_X,
\end{equation}
for the compression of \(G\) to the Majorana one-particle space over \(X\).
Thus \(G_{A\cup B}\) denotes the compression to the union \(A\cup B\).  This is
distinct from the off-diagonal block notation \(G_{x, y}=\Pi_XG\Pi_Y\), which
will be used below only when \(X\) and \(Y\) are single lattice sites or
disjoint regions.

The reduced state \(\rho_X\) is Gaussian and is determined by \(G_X\).  Its
entropy is
\begin{equation}
        S(\rho_X)
        =
        -\tr\bigl(G_X\log G_X\bigr),
        \label{eq:entropy-green-region},
\end{equation}
where the trace is over the Majorana one-particle space of \(X\).

For \(k\ge1\), define the inclusion--exclusion trace difference
\begin{equation}
        \Delta_k
        :=
        \tr\bigl(
        G_{A\cup B}^k
        +
        G_{B\cup C}^k
        -
        G_B^k
        -
        G_{A\cup B\cup C}^k
        \bigr).
        \label{eq:Delta-k-definition}
\end{equation}
The following expansion is the starting point for the bound.

\begin{proposition}[Power-series expansion of the conditional mutual information]
\label{prop:cmi-series}
For a fermionic Gaussian state on \(A\cup B\cup C\),
\begin{equation}
        I_\rho(A:C\mid B)
        =
        -
        \sum_{k=2}^{\infty}
        \frac{\Delta_k}{k(k-1)} .
        \label{eq:cmi-series}
\end{equation}
\end{proposition}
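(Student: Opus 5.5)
The plan is to expand each of the four entropies in the conditional mutual information as a power series in the compressed Green operator and then combine them. The point to handle is that \(-x\log x\) has no power series at \(x=0\), whereas \(-(1-x)\log(1-x)\) does. I would first use particle--hole symmetry to pass from one to the other, then expand, and finally check that the linear terms cancel under inclusion--exclusion.

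\emph{Step 1 (symmetry of compressions).} For any finite region \(X\), the projection \(\Pi_X\) is real. Together with \eqref{eq:particle-hole-G} this gives
\[
G_X^T=\Pi_X G^T\Pi_X=\Pi_X-G_X ,
\]
that is, \(G_X^T=\one_X-G_X\) on the Majorana space of \(X\). Since \(G_X\) and \(G_X^T\) have the same spectrum, the spectrum of \(G_X\) is invariant under \(\lambda\mapsto1-\lambda\), counted with multiplicity. Its eigenvalues lie in \([0,1]\) because \(0\preceq G\preceq\one\). Consequently, for any function \(g\) on \([0,1]\) we have \(\tr g(G_X)=\tr g(\one_X-G_X)\). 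Applying this to \(g(x)=-x\log x\) and using \eqref{eq:entropy-green-region},
\[
S(\rho_X)=-\tr\bigl((\one_X-G_X)\log(\one_X-G_X)\bigr).
\]

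\emph{Step 2 (scalar expansion).} For \(x\in[0,1)\), use \(\log(1-x)=-\sum_{j\ge1}x^j/j\) to obtain
\[
-(1-x)\log(1-x)=(1-x)\sum_{j\ge1}\frac{x^j}{j}
=x+\sum_{k\ge2}x^k\Bigl(\frac1k-\frac1{k-1}\Bigr)
=x-\sum_{k\ge2}\frac{x^k}{k(k-1)} .
\]
Since \(\sum_k 1/(k(k-1))<\infty\), the Weierstrass M-test shows that the right-hand side converges uniformly on \([0,1]\). Both sides are continuous on \([0,1]\), so the identity also holds at \(x=1\), where both sides equal \(0\). Applying it eigenvalue by eigenvalue, and interchanging the finite trace with the absolutely convergent series, gives
\[
S(\rho_X)=\tr G_X-\sum_{k\ge2}\frac{\tr G_X^k}{k(k-1)} .
\]

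\emph{Step 3 (inclusion--exclusion).} Insert this formula for \(X=A\cup B\), \(B\cup C\), \(B\), and \(A\cup B\cup C\), with the signs of the conditional mutual information. Each of the four series converges absolutely, so they may be added termwise. The coefficient of \(1/(k(k-1))\) is then exactly \(-\Delta_k\). The linear terms combine to \(\Delta_1\). Since \(\tr G_X=m_X\), this equals
\[
\Delta_1=m_{A\cup B}+m_{B\cup C}-m_B-m_{A\cup B\cup C}=0 ,
\]
by additivity of mode counts over the disjoint regions. This yields \eqref{eq:cmi-series}.

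The only delicate step is the first one: the naive expansion of \(-x\log x\) fails, and the argument depends on the symmetric spectrum of each compression to switch to \(-(1-x)\log(1-x)\). The convergence at \(x=1\), corresponding to pure modes, is handled by the uniform convergence from the M-test in Step 2.
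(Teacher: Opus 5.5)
Your proof is correct and follows essentially the same route as the paper: the scalar expansion of $(1-x)\log(1-x)$, particle--hole symmetry $G_X^T=\one_X-G_X$ to identify $-\tr\bigl((\one_X-G_X)\log(\one_X-G_X)\bigr)$ with $S(\rho_X)$, and cancellation of the linear terms via $\tr G_X=m_X$. Your Weierstrass M-test argument at $x=1$ and your explicit derivation of the spectral symmetry for the compressions make rigorous what the paper handles by appeal to continuity.
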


\begin{proof}
We use the scalar identity
\begin{equation}
        (1-x)\log(1-x)
        =
        -x+\sum_{k=2}^{\infty}\frac{x^k}{k(k-1)},
        \qquad 0\le x\le1 .
        \label{eq:one-minus-x-log-expansion}
\end{equation}
The identity is understood by continuity at the endpoints \(x=0,1\).  Applying
it to \(G_X\) by functional calculus gives
\begin{equation}
        -\tr\bigl((\one_X-G_X)\log(\one_X-G_X)\bigr)
        =
        \tr G_X
        -
        \sum_{k=2}^{\infty}
        \frac{\tr(G_X^k)}{k(k-1)} .
\end{equation}
The particle-hole symmetry gives \(G_X^T=\one_X-G_X\), so \(G_X\) and
\(\one_X-G_X\) have the same spectrum.  Hence the left-hand side equals
\(S(\rho_X)\), and therefore
\begin{equation}
        S(\rho_X)
        =
        \tr G_X
        -
        \sum_{k=2}^{\infty}
        \frac{\tr(G_X^k)}{k(k-1)} .
        \label{eq:entropy-power-series}
\end{equation}

Substituting \eqref{eq:entropy-power-series} into the definition of
\(I_\rho(A:C\mid B)\), the linear terms cancel:
\begin{equation}
        \tr G_{A\cup B}
        +
        \tr G_{B\cup C}
        -
        \tr G_B
        -
        \tr G_{A\cup B\cup C}
        =
        0 .
\end{equation}
The remaining terms are exactly the inclusion--exclusion traces
\(\Delta_k\).  Thus
\begin{equation}
\begin{aligned}
        I_\rho(A:C\mid B)
        &=
        -
        \sum_{k=2}^{\infty}
        \frac{
        \tr(G_{A\cup B}^k)
        +
        \tr(G_{B\cup C}^k)
        -
        \tr(G_B^k)
        -
        \tr(G_{A\cup B\cup C}^k)}
        {k(k-1)}  \\
        &=
        -
        \sum_{k=2}^{\infty}
        \frac{\Delta_k}{k(k-1)} .
\end{aligned}
\end{equation}
This proves \eqref{eq:cmi-series}.
\end{proof}

The expansion \eqref{eq:cmi-series} reduces the problem to estimating
\(\Delta_k\).  The expression \(\Delta_k\) is designed so that contributions
which do not connect \(A\) to \(C\) cancel.  This cancellation will be used
for the locality bound.  Before using locality, however, we prove a uniform
bound which is independent of the separation between \(A\) and \(C\).

\subsection{Bounds on the trace differences}
\label{subsec:delta-estimates}

There are two bounds on \(\Delta_k\).  The first one is purely
finite-dimensional: it only uses the fact that \(A\) and \(C\) contain finitely
many one-particle modes.  This gives a uniform bound in \(k\), which will later
control the tail of \eqref{eq:cmi-series}.  The second bound uses spatial
locality of \(G\).  It gives decay in the distance between \(A\) and \(C\), but
has a constant which grows with \(k\).  This is why it will only be used below
a cutoff.

For a finite region \(X\), let
\begin{equation}
        N_X:=\rank\Pi_X,
\end{equation}
be the dimension of the Majorana one-particle space over \(X\).  If each
lattice site carries \(m_{\rm site}\) fermionic modes, then
\begin{equation}
        N_X=2m_{\rm site}|X| .
\end{equation}
We use \(N_X\), rather than \(M_X\), because \(M_X\) denotes the covariance
matrix of the region \(X\).

We begin with the elementary finite-dimensional bound used in the rank
estimate.

\begin{lemma}[Trace bound under projection]
\label{lem:projection-trace}
Let \(T\) be an \(N\times N\) Hermitian matrix with \(0\preceq T\preceq\one\),
and let \(P\) be an orthogonal projection.  Then, for every \(k\ge1\),
\begin{equation}
        0
        \le
        \tr T^k-\tr\bigl((PTP)^k\bigr)
        \le
        \rank(\one-P).
        \label{eq:projection-trace}
\end{equation}
\end{lemma}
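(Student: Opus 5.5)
Both inequalities follow from eigenvalue interlacing (Cauchy interlacing) for the compression of \(T\) to the range of \(P\). Let \(r:=\rank P\) and \(q:=\rank(\one-P)=N-r\), and let \(\lambda_1\ge\cdots\ge\lambda_N\) be the eigenvalues of \(T\), all in \([0,1]\). The operator \(PTP\) has \(q\) zero eigenvalues on \(\ker P\). Its remaining eigenvalues \(\mu_1\ge\cdots\ge\mu_r\) are the eigenvalues of the compression \(T|_{\operatorname{ran}P}\). Since \(0\preceq T\preceq\one\), this compression also satisfies \(0\preceq PTP\preceq P\), so \(\mu_j\in[0,1]\). For \(k\ge1\) we therefore have
\begin{equation}
        \tr\bigl((PTP)^k\bigr)=\sum_{j=1}^{r}\mu_j^k ,
\end{equation}
because the zero block contributes nothing.

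For the lower bound, Cauchy interlacing for a compression to an \(r\)-dimensional subspace gives \(\lambda_{j+q}\le\mu_j\le\lambda_j\) for \(j=1,\ldots,r\). The map \(x\mapsto x^k\) is monotone on \([0,1]\), so \(\mu_j^k\le\lambda_j^k\). Summing over \(j\) gives
\begin{equation}
        \tr\bigl((PTP)^k\bigr)\le\sum_{j=1}^r\lambda_j^k\le\tr T^k ,
\end{equation}
where the last step uses \(\lambda_j^k\ge0\) for the omitted indices \(j>r\).

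For the upper bound, we use the other half of interlacing, \(\mu_j\ge\lambda_{j+q}\). This gives
\begin{equation}
        \tr T^k-\tr\bigl((PTP)^k\bigr)
        \le
        \sum_{j=1}^N\lambda_j^k-\sum_{j=1}^{r}\lambda_{j+q}^k
        =
        \sum_{j=1}^{q}\lambda_j^k
        \le q ,
\end{equation}
since each \(\lambda_j^k\le1\). This is exactly \(\rank(\one-P)\).

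The one point needing care is the bookkeeping of the zero eigenvalues of \(PTP\) on \(\ker P\), so that interlacing is applied to the \(r\)-dimensional compression rather than to \(PTP\) on the full space. Once \(\tr((PTP)^k)\) is identified with the sum of \(\mu_j^k\) over \(j\le r\), interlacing on both sides completes the argument. No Gaussian structure is used, only \(0\preceq T\preceq\one\).
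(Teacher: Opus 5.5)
Your proof is correct and takes essentially the same approach as the paper: compress $T$ to the range of $P$, apply Cauchy interlacing $\lambda_j\ge\mu_j\ge\lambda_{j+q}$, and use monotonicity of $x\mapsto x^k$ on $[0,1]$ for both bounds, with the upper bound reducing to $\sum_{j=1}^{q}\lambda_j^k\le q$. Your treatment of the zero eigenvalues on $\ker P$ corresponds to the paper's choice of basis with $P=\operatorname{diag}(\one_L,0)$.
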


\begin{proof}
Let \(L=\rank P\).  Choose an orthonormal basis in which
\(P=\operatorname{diag}(\one_L,0)\).  In this basis,
\begin{equation}
        PTP=
        \begin{pmatrix}
        T_P & 0\\
        0 & 0
        \end{pmatrix},
\end{equation}
where \(T_P\) is an \(L\times L\) principal submatrix of \(T\).  Let
\(\lambda_1\ge\cdots\ge\lambda_N\) be the eigenvalues of \(T\), and let
\(\mu_1\ge\cdots\ge\mu_L\) be the eigenvalues of \(T_P\).  By Cauchy's
interlacing theorem,
\begin{equation}
        \lambda_j\ge \mu_j\ge \lambda_{j+N-L},
        \qquad 1\le j\le L .
\end{equation}
Since all eigenvalues lie in \([0,1]\), the function \(x\mapsto x^k\) is
increasing on the relevant interval.  Hence
\begin{equation}
        \sum_{j=1}^L \lambda_j^k
        \ge
        \sum_{j=1}^L \mu_j^k
        \ge
        \sum_{j=1}^L \lambda_{j+N-L}^k .
\end{equation}
Because \(\tr((PTP)^k)=\sum_{j=1}^L\mu_j^k\), we get
\begin{equation}
        \tr T^k-\tr\bigl((PTP)^k\bigr)
        =
        \sum_{j=1}^N\lambda_j^k-\sum_{j=1}^L\mu_j^k
        \ge 0 .
\end{equation}
For the upper bound,
\begin{equation}
\begin{aligned}
        \tr T^k-\tr\bigl((PTP)^k\bigr)
        &\le
        \sum_{j=1}^N\lambda_j^k
        -
        \sum_{j=1}^L\lambda_{j+N-L}^k  \\
        &=
        \sum_{j=1}^{N-L}\lambda_j^k
        \le
        N-L
        =
        \rank(\one-P).
\end{aligned}
\end{equation}
This proves \eqref{eq:projection-trace}.
\end{proof}

\begin{lemma}[Uniform trace bound]
\label{lem:uniform-delta}
For every \(k\ge1\),
\begin{equation}
        |\Delta_k|
        \le
        \min\{N_A,N_C\}.
        \label{eq:uniform-delta}
\end{equation}
\end{lemma}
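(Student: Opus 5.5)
We will write \(\Delta_k\) as the difference of two projection-compression defects and bound each one with Lemma~\ref{lem:projection-trace}. Set \(T:=G_{A\cup B\cup C}\), which satisfies \(0\preceq T\preceq\one\). All other Green operators in \(\Delta_k\) are compressions of \(T\): \(G_{A\cup B}=\Pi_{A\cup B}T\Pi_{A\cup B}\), \(G_{B\cup C}=\Pi_{B\cup C}T\Pi_{B\cup C}\), and \(G_B=\Pi_BT\Pi_B\). Their traces of \(k\)-th powers coincide with the traces computed on the smaller one-particle spaces, because the compressed operators vanish off their ranges. Grouping the terms in the definition \eqref{eq:Delta-k-definition} as
\begin{equation}
        \Delta_k
        =
        \bigl[\tr G_{B\cup C}^k-\tr G_{A\cup B\cup C}^k\bigr]
        -
        \bigl[\tr G_B^k-\tr G_{A\cup B}^k\bigr]
\end{equation}
expresses \(\Delta_k\) as the difference of two quantities, each of the form \(-\bigl(\tr T'^k-\tr(PT'P)^k\bigr)\).

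For the first bracket we apply Lemma~\ref{lem:projection-trace} with \(T'=T\) and \(P=\Pi_{B\cup C}\). The complementary projection \(\one-P\) inside \(A\cup B\cup C\) is \(\Pi_A\), of rank \(N_A\). This gives \(\tr G_{A\cup B\cup C}^k-\tr G_{B\cup C}^k\in[0,N_A]\). For the second bracket we apply the lemma on the one-particle space of \(A\cup B\) with \(T'=G_{A\cup B}\), which also lies between \(0\) and \(\one\), and \(P=\Pi_B\). Since \(\Pi_BG_{A\cup B}\Pi_B=G_B\), this gives \(\tr G_{A\cup B}^k-\tr G_B^k\in[0,N_A]\). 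Therefore \(\Delta_k=x-y\) with \(x,y\in[0,N_A]\), so \(|\Delta_k|\le N_A\).

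Exchanging the roles of \(A\) and \(C\), the grouping becomes \(\Delta_k=[\tr G_{A\cup B}^k-\tr G_{A\cup B\cup C}^k]-[\tr G_B^k-\tr G_{B\cup C}^k]\). The same two applications of the lemma, now with removed projection \(\Pi_C\), give \(|\Delta_k|\le N_C\). Taking the minimum proves \eqref{eq:uniform-delta}.

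The only delicate point is the sign bookkeeping. It is essential to use the fact that both brackets have the same sign, since each lies in \([0,N_A]\); this is what yields \(N_A\) rather than \(2N_A\). One should also check that the case \(B=\emptyset\) is harmless. In that case \(G_B=0\), so \(\tr G_B^k=0\) for \(k\ge1\), and the lemma still applies.
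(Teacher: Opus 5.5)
Your proof is correct and is essentially the paper's argument. You regroup \(\Delta_k\) as the difference of the two compression defects \(\tr G_{A\cup B}^k-\tr G_B^k\) and \(\tr G_{A\cup B\cup C}^k-\tr G_{B\cup C}^k\), each lying in \([0,N_A]\) by Lemma~\ref{lem:projection-trace}, and then repeat with the roles of \(A\) and \(C\) exchanged. Your grouping differs from the paper's only by an overall sign rearrangement, and the remark about \(B=\emptyset\) is a harmless extra check.
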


\begin{proof}
We first group the terms according to the region \(A\):
\begin{equation}
        \Delta_k
        =
        \bigl(\tr G_{A\cup B}^k-\tr G_B^k\bigr)
        -
        \bigl(\tr G_{A\cup B\cup C}^k-\tr G_{B\cup C}^k\bigr).
\end{equation}
Inside the one-particle space of \(A\cup B\), let \(P_B\) be the projection
onto the \(B\)-modes.  Then
\begin{equation}
        G_B=P_BG_{A\cup B}P_B .
\end{equation}
Since \(0\preceq G_{A\cup B}\preceq\one\) and
\(\rank(\one-P_B)=N_A\), Lemma~\ref{lem:projection-trace} gives
\begin{equation}
        0
        \le
        \tr G_{A\cup B}^k-\tr G_B^k
        \le
        N_A.
\end{equation}
Similarly, \(G_{B\cup C}\) is obtained from \(G_{A\cup B\cup C}\) by projecting
away the \(A\)-modes, and hence
\begin{equation}
        0
        \le
        \tr G_{A\cup B\cup C}^k-\tr G_{B\cup C}^k
        \le
        N_A .
\end{equation}
Thus \(\Delta_k\) is the difference of two numbers in \([0,N_A]\), and so
\begin{equation}
        |\Delta_k|\le N_A .
\end{equation}

Interchanging \(A\) and \(C\), we also write
\begin{equation}
        \Delta_k
        =
        \bigl(\tr G_{B\cup C}^k-\tr G_B^k\bigr)
        -
        \bigl(\tr G_{A\cup B\cup C}^k-\tr G_{A\cup B}^k\bigr).
\end{equation}
The same argument gives \(|\Delta_k|\le N_C\).  Combining the two bounds proves
\eqref{eq:uniform-delta}.
\end{proof}

The bound \eqref{eq:uniform-delta} does not use locality.  Inserted directly
into \eqref{eq:cmi-series}, it gives only the trivial bound
\(I_\rho(A:C\mid B)\le\min\{N_A,N_C\}\), since
\(\sum_{k\ge2}1/(k(k-1))=1\).  To obtain decay in the separation between
\(A\) and \(C\), we need a bound which uses locality of the Green operator.

\subsection{Locality bound for the finite part of the series}
\label{subsec:locality-estimate}

We now exploit mutual-information clustering to derive quantitative spatial decay directly at the level of the covariance matrix. Let $x,y\in\mathbb Z^\nu$ denote lattice sites, and let $G_{x,y}$ be the one-particle block of the Green operator mapping the Majorana space associated with $y$ to that associated with $x$.
Since each site carries only finitely many
fermionic modes, exponential clustering of mutual information implies, after
possibly changing constants, an exponential decay estimate for these blocks:
there exist constants \(C_0,\alpha>0\) such that
\begin{equation}
        \trnorm{G_{x, y}}
        \le
        C_0 e^{-\alpha |x-y|_1},
        \qquad x,y\in\mathbb Z^\nu .
        \label{eq:green-locality}
\end{equation}

Let us briefly recall why this follows from
Definition~\ref{def:mi-clustering}.  For single-site regions \(x\) and \(y\),
Pinsker's inequality gives a trace-distance bound on
\(\rho_{x, y}-\rho_x\otimes\rho_y\) in terms of \(I_\rho(x:y)^{1/2}\).  The
entries of the off-diagonal covariance block are expectation values of
bounded Majorana bilinears, and hence are controlled by this trace distance.
Since the internal one-site Majorana dimension is fixed, entrywise exponential
decay is equivalent, up to a constant, to trace-norm exponential decay of the
block.  Finally, for \(x\ne y\),
\begin{equation}
        G_{x, y}
        =
        \frac{i}{2}M_{x, y},
\end{equation}
so the same exponential decay holds for the Green-function block.

The next lemma says that this exponential decay is stable under taking powers.
We state it in the form needed below, uniformly for all finite compressions of
the Green operator.  Each multiplication costs one convolution factor, so the
constant grows exponentially in the power.  This growth is the reason the
resulting bound will only be used for \(k\) below a cutoff.

For \(0\le\beta<\alpha\), define
\begin{equation}
        Q_\beta
        :=
        C_0
        \sum_{z\in\mathbb Z^\nu}
        e^{-(\alpha-\beta)|z|_1}.
        \label{eq:Q-beta}
\end{equation}
The sum is finite for every \(\beta<\alpha\), and \(Q_\beta\ge C_0\).

\begin{lemma}[Decay of powers]
\label{lem:power-locality}
Let \(R\subset\mathbb Z^\nu\) be finite, and let
\(G_R=\Pi_RG\Pi_R\).  For every \(0\le\beta<\alpha\), every \(\ell\ge1\), and
all \(x,y\in R\),
\begin{equation}
        \trnorm{(G_R^\ell)_{x, y}}
        \le
        Q_\beta^\ell e^{-\beta |x-y|_1}.
        \label{eq:power-locality}
\end{equation}
\end{lemma}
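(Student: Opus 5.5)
The plan is induction on \(\ell\), expanding the matrix power site by site and using submultiplicativity of the trace norm together with a weighted triangle inequality.

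For \(\ell=1\), \((G_R)_{x,y}=G_{x,y}\) for \(x,y\in R\), since compressing to \(R\) does not change blocks between sites of \(R\). Then \eqref{eq:green-locality} gives
\begin{equation}
\trnorm{G_{x,y}}\le C_0e^{-\alpha|x-y|_1}\le Q_\beta e^{-\beta|x-y|_1},
\end{equation}
using \(\beta\le\alpha\) and \(Q_\beta\ge C_0\).

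Some care is needed about the diagonal blocks \(x=y\). The clustering argument only controls \(x\neq y\). For \(x=y\) we use \(0\preceq G\preceq\one\), so \(\|G_{x,x}\|_\infty\le1\) and \(\trnorm{G_{x,x}}\le N_x=2m_{\rm site}\). The constant \(C_0\) is then tacitly taken at least \(2m_{\rm site}\), which one may always do by enlarging it.

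For the inductive step, write \(G_R^{\ell+1}=G_R^\ell G_R\). Block-wise,
\begin{equation}
(G_R^{\ell+1})_{x,y}=\sum_{z\in R}(G_R^\ell)_{x,z}(G_R)_{z,y}.
\end{equation}
The sum over \(z\in R\) is finite, and the intermediate projection \(\Pi_R\) decomposes as \(\sum_{z\in R}\Pi_z\). The triangle inequality and \(\trnorm{XY}\le\trnorm{X}\,\|Y\|_\infty\le\trnorm{X}\,\trnorm{Y}\) then give
\begin{equation}
\trnorm{(G_R^{\ell+1})_{x,y}}\le\sum_{z\in R}Q_\beta^\ell e^{-\beta|x-z|_1}\,C_0e^{-\alpha|z-y|_1}.
\end{equation}

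Now split \(e^{-\alpha|z-y|_1}=e^{-\beta|z-y|_1}e^{-(\alpha-\beta)|z-y|_1}\). The triangle inequality for \(|\cdot|_1\) gives \(e^{-\beta|x-z|_1}e^{-\beta|z-y|_1}\le e^{-\beta|x-y|_1}\). The remaining sum is bounded by extending it from \(z\in R\) to \(z\in\mathbb Z^\nu\):
\begin{equation}
\sum_{z\in R}C_0e^{-(\alpha-\beta)|z-y|_1}\le Q_\beta.
\end{equation}
This yields \(Q_\beta^{\ell+1}e^{-\beta|x-y|_1}\), which closes the induction.

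The bound is uniform in \(R\), since \(R\) only truncates the convolution sum. The only delicate points are the diagonal-block convention for \(C_0\) and the use of \(\trnorm{XY}\le\trnorm{X}\trnorm{Y}\). The latter is valid because \(\|Y\|_\infty\le\trnorm{Y}\), and it costs nothing here.
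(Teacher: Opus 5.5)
Your proof is correct and is essentially the paper's argument: induction on \(\ell\), the expansion \((G_R^{\ell+1})_{x,y}=\sum_{z\in R}(G_R^\ell)_{x,z}G_{z,y}\), submultiplicativity of the trace norm, the triangle inequality for \(|\cdot|_1\), and extending the sum from \(R\) to \(\mathbb Z^\nu\) to obtain the factor \(Q_\beta/C_0\). Your remark on the diagonal blocks clarifies a point the paper leaves implicit, since its justification of \eqref{eq:green-locality} covers only \(x\ne y\); because \(G_{x,x}\succeq0\), one has \(\trnorm{G_{x,x}}=\tr G_{x,x}=m_{\rm site}\), so \(C_0\ge m_{\rm site}\) already suffices.
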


\begin{proof}
We prove the bound by induction on \(\ell\).  For \(\ell=1\), the estimate is
\begin{equation}
        \trnorm{(G_R)_{x, y}}
        =
        \trnorm{G_{x, y}}
        \le
        C_0 e^{-\alpha |x-y|_1}
        \le
        Q_\beta e^{-\beta |x-y|_1},
\end{equation}
using \(Q_\beta\ge C_0\) and \(\beta<\alpha\).

Assume the bound holds for some \(\ell\ge1\).  Since
\begin{equation}
        (G_R^{\ell+1})_{x, y}
        =
        \sum_{z\in R}(G_R^\ell)_{xz}G_{zy},
\end{equation}
the induction hypothesis, \eqref{eq:green-locality}, and
\(|x-y|_1\le |x-z|_1+|z-y|_1\) give
\begin{equation}
\begin{aligned}
        \trnorm{(G_R^{\ell+1})_{x, y}}
        &\le
        \sum_{z\in R}
        \trnorm{(G_R^\ell)_{xz}}\,
        \trnorm{G_{zy}}  \\
        &\le
        Q_\beta^\ell C_0
        \sum_{z\in R}
        e^{-\beta |x-z|_1}
        e^{-\alpha |z-y|_1}  \\
        &\le
        Q_\beta^\ell C_0
        e^{-\beta |x-y|_1}
        \sum_{z\in R}
        e^{-(\alpha-\beta)|z-y|_1}  \\
        &\le
        Q_\beta^\ell C_0
        e^{-\beta |x-y|_1}
        \sum_{z\in\mathbb Z^\nu}
        e^{-(\alpha-\beta)|z-y|_1}.
\end{aligned}
\end{equation}
By translation invariance of the lattice sum, the last sum is \(Q_\beta/C_0\).
Hence
\begin{equation}
        \trnorm{(G_R^{\ell+1})_{x, y}}
        \le
        Q_\beta^{\ell+1}e^{-\beta |x-y|_1}.
\end{equation}
This completes the induction.
\end{proof}

The previous lemma bounds a path of length \(\ell\) between two fixed sites. The next estimate uses the inclusion--exclusion structure of \(\Delta_k\).
For this purpose it is useful to work with a positive path-norm quantity.
Let \(R\subset\mathbb Z^\nu\) be finite.  For \(x,y\in R\) and
\(\ell\ge1\), define
\begin{equation}
        L_\ell^R(x,y)
        :=
        \sum_{\substack{z_0,\ldots,z_\ell\in R\\
        z_0=x,\ z_\ell=y}}
        \prod_{j=0}^{\ell-1}
        \trnorm{G_{z_jz_{j+1}}}.
        \label{eq:path-norm-definition}
\end{equation}
Thus \(L_1^R(x,y)=\trnorm{G_{x, y}}\).  Equivalently, if
\(\mathsf W_R\) is the scalar nonnegative matrix with entries
\begin{equation}
        (\mathsf W_R)_{x, y}:=\trnorm{G_{x, y}},
        \qquad x,y\in R,
\end{equation}
then
\begin{equation}
        L_\ell^R(x,y)=(\mathsf W_R^\ell)_{x, y}.
\end{equation}

Let
\begin{equation}
        R:=A\cup B\cup C,
        \qquad
        r=\dist(A,C).
\end{equation}

\begin{lemma}[Cycle-splitting bound]
\label{lem:cycle-splitting}
For every \(k\ge2\),
\begin{equation}
        |\Delta_k|
        \le
        2k
        \sum_{m=1}^{k-1}
        \sum_{a\in A}
        \sum_{c\in C}
        L_m^R(a,c)\,
        L_{k-m}^R(c,a).
        \label{eq:cycle-splitting}
\end{equation}
\end{lemma}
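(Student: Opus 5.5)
We expand each trace $\tr(G_X^k)$ as a sum over closed site-words and use inclusion--exclusion to cancel every word that fails to visit both $A$ and $C$. Concretely, for a region $X\subseteq R$,
\begin{equation}
        \tr(G_X^k)
        =
        \sum_{\substack{z_0,\ldots,z_{k-1}\in X}}
        \tr\bigl(G_{z_0z_1}G_{z_1z_2}\cdots G_{z_{k-1}z_0}\bigr),
\end{equation}
because compressing to $X$ simply restricts every intermediate site to lie in $X$. Write $\tau(w)$ for the summand attached to a closed word $w=(z_0,\ldots,z_{k-1})$ in $R$. Let $\one[\cdot]$ denote indicator functions. The four terms of $\Delta_k$ then combine into a single sum over words in $R$ with weight
\begin{equation}
        \one[w\subseteq A\cup B]+\one[w\subseteq B\cup C]-\one[w\subseteq B]-\one[w\subseteq R].
\end{equation}
This weight vanishes unless $w$ meets both $A$ and $C$, in which case it equals $-1$. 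We check this by cases. If $w$ avoids $A$, the second and fourth indicators are $1$ and the first two cancel against the last two. The case where $w$ avoids $C$ is symmetric. If $w$ meets both, only the last indicator is nonzero. Hence
\begin{equation}
        \Delta_k=-\sum_{w\,\text{meets }A\text{ and }C}\tau(w).
\end{equation}

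Next we bound each term. Hölder's inequality gives $|\tr(G_{z_0z_1}\cdots G_{z_{k-1}z_0})|\le\prod_j\|G_{z_jz_{j+1}}\|_\infty\le\prod_j\trnorm{G_{z_jz_{j+1}}}$, with indices taken cyclically. It is enough to use the trace norm of one factor and operator norms of the others, and then bound every operator norm by the trace norm. So $|\Delta_k|\le\sum_w\prod_j\trnorm{G_{z_jz_{j+1}}}$, where $w$ ranges over closed words of length $k$ in $R$ that visit both $A$ and $C$.

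Then we split each such cyclic word. Pick a position $s$ with $z_s\in A$, and a position $t$ at which $z_t\in C$, at cyclic distance $m\in\{1,\ldots,k-1\}$ after $s$. At the cost of multiplicity, charge the word to every pair $(s,t)$ with $z_s=a\in A$ and $z_t=c\in C$. Each pair cuts the cycle into an open path of length $m$ from $a$ to $c$ and an open path of length $k-m$ from $c$ to $a$, both inside $R$. The map from (word, marked $s$) to (starting rotation, $m$, path pair) is injective after rotating so that $s=0$. Each word has at most $k$ rotations. Summing the product weights over all such path pairs gives $\sum_{a,c}L^R_m(a,c)L^R_{k-m}(c,a)$ for each $m$. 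Overcounting only increases a nonnegative bound, so we obtain $|\Delta_k|\le k\sum_m\sum_{a,c}L_mL_{k-m}$. The factor $2k$ in the statement leaves slack. It also covers the alternative convention of charging a word once for an $A\to C$ split and once for a $C\to A$ split, or of not rotating.

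The main obstacle is the combinatorial bookkeeping in this last step. Each word meeting both regions must be covered at least once by the path-pair sums, with the multiplicity bounded by $2k$. For existence, every such cyclic word has a rotation starting at some $a\in A$ that first hits $C$ at some step $m\in[1,k-1]$; this rotation is counted in the $m$-th term. All summands are nonnegative, so counting extra pairs is harmless. The first step, the sign analysis of the inclusion--exclusion weights, is routine but must be checked carefully.
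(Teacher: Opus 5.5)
Your proposal is correct and follows the paper's proof: expand each trace into closed words, use inclusion--exclusion to keep only the words meeting both $A$ and $C$ (each with coefficient $-1$), bound each word by the product of trace norms, and split at a marked $A$-visit and $C$-visit; your count also correctly shows that the constant $k$ already suffices, so the paper's extra factor $2$ from \emph{both cyclic orientations} is slack. One slip needs fixing: the displayed bound $|\tr(G_{z_0z_1}\cdots G_{z_{k-1}z_0})|\le\prod_j\|G_{z_jz_{j+1}}\|_\infty$ is false in general (e.g.\ $|\tr\one_d|=d$ while $\|\one_d\|_\infty=1$), and the step should instead read $|\tr(Y_1\cdots Y_k)|\le\trnorm{Y_1}\prod_{j\ge2}\|Y_j\|_\infty\le\prod_j\trnorm{Y_j}$, which is the argument you then give in words.
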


\begin{proof}
We expand the four traces in the definition of \(\Delta_k\) as sums over
closed one-particle words of length \(k\).  A closed word is a sequence
\begin{equation}
        \mathbf x=(x_1,\ldots,x_k)\in R^k,
        \qquad x_{k+1}:=x_1,
\end{equation}
and its contribution to \(\tr(G_R^k)\) is
\begin{equation}
        T(\mathbf x)
        :=
        \tr\!\left(
        G_{x_1x_2}G_{x_2x_3}\cdots G_{x_kx_1}
        \right).
\end{equation}
The inclusion--exclusion coefficient of this word in \(\Delta_k\) is
\begin{equation}
\begin{aligned}
        \chi(\mathbf x)
        &:=
        \one_{\{x_1,\ldots,x_k\subset A\cup B\}}
        +
        \one_{\{x_1,\ldots,x_k\subset B\cup C\}}  \\
        &\quad
        -
        \one_{\{x_1,\ldots,x_k\subset B\}}
        -
        \one_{\{x_1,\ldots,x_k\subset R\}} .
\end{aligned}
\end{equation}
Since the word is always contained in \(R\), the last indicator is equal to
one.  If the word is contained in \(B\), then the four terms give
\(1+1-1-1=0\).  If the word visits \(A\) but not \(C\), then it is contained
in \(A\cup B\) but not in \(B\cup C\) or \(B\), and the coefficient is
\(1+0-0-1=0\).  Similarly, if the word visits \(C\) but not \(A\), the
coefficient is also zero.  Therefore only words which visit both \(A\) and
\(C\) contribute.  For such words,
\begin{equation}
        \chi(\mathbf x)=-1,
\end{equation}
and hence
\begin{equation}
        \Delta_k
        =
        -
        \sum_{\substack{\mathbf x\in R^k:\\
        \{x_1,\ldots,x_k\}\cap A\ne\emptyset,\,
        \{x_1,\ldots,x_k\}\cap C\ne\emptyset}}
        T(\mathbf x).
        \label{eq:mixed-word-expansion}
\end{equation}

For every closed word,
\begin{equation}
\begin{aligned}
        |T(\mathbf x)|
        &=
        \left|
        \tr\!\left(
        G_{x_1x_2}G_{x_2x_3}\cdots G_{x_kx_1}
        \right)
        \right|  \\
        &\le
        \prod_{j=1}^k \trnorm{G_{x_jx_{j+1}}}.
\end{aligned}
\end{equation}
Indeed, this follows from
\begin{equation}
        |\tr(Y_1\cdots Y_k)|
        \le
        \trnorm{Y_1}\prod_{j=2}^k\|Y_j\|_\infty
        \le
        \prod_{j=1}^k\trnorm{Y_j}.
\end{equation}

We now split each contributing closed word at a visit to \(A\) and a visit to
\(C\).  Since the word is cyclic, we may choose a starting point in \(k\)
ways.  If \(x_j\in A\) and \(x_{j+m}\in C\), with cyclic indices and
\(1\le m\le k-1\), then the word is split into a path of length \(m\) from
\(A\) to \(C\), and a path of length \(k-m\) from \(C\) back to \(A\).
Allowing both cyclic orientations gives
\begin{equation}
\begin{aligned}
        |\Delta_k|
        &\le
        2
        \sum_{j=1}^k
        \sum_{m=1}^{k-1}
        \sum_{a\in A}
        \sum_{c\in C}
        \sum_{\substack{\mathbf x\in R^k:\\
        x_j=a,\ x_{j+m}=c}}
        \prod_{\ell=1}^k
        \trnorm{G_{x_\ell x_{\ell+1}}}.
\end{aligned}
\end{equation}
For fixed \(j,m,a,c\), the sum over the intermediate sites on the first piece
of the cycle gives \(L_m^R(a,c)\), while the sum over the intermediate sites
on the second piece gives \(L_{k-m}^R(c,a)\).  Therefore
\begin{equation}
        |\Delta_k|
        \le
        2k
        \sum_{m=1}^{k-1}
        \sum_{a\in A}
        \sum_{c\in C}
        L_m^R(a,c)\,
        L_{k-m}^R(c,a).
\end{equation}
This proves \eqref{eq:cycle-splitting}.
\end{proof}

Combining the cycle decomposition with the decay of the one-particle blocks
gives the desired locality bound for \(\Delta_k\).

\begin{lemma}[Locality bound for mixed words]
\label{lem:mixed-delta}
Let \(0<\beta<\alpha\).  For every \(k\ge2\),
\begin{equation}
        |\Delta_k|
        \le
        \frac{2kQ_0}{Q_\beta-Q_0}\,
        |A|\,|C|\,Q_\beta^k e^{-\beta r}.
        \label{eq:mixed-delta}
\end{equation}
Here \(|A|\) and \(|C|\) denote the numbers of lattice sites in the two
regions.
\end{lemma}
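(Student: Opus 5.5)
We start from the cycle-splitting bound of Lemma~\ref{lem:cycle-splitting} and bound each path norm $L_m^R(a,c)$ and $L_{k-m}^R(c,a)$ by a decaying estimate that does not depend on $R$. The scalar analogue of Lemma~\ref{lem:power-locality} applies to the nonnegative matrix $\mathsf W_R$. Induct on $\ell$ using $(\mathsf W_R)_{x,y}\le C_0e^{-\alpha|x-y|_1}$, the triangle inequality $e^{-\beta|x-z|_1}e^{-\alpha|z-y|_1}\le e^{-\beta|x-y|_1}e^{-(\alpha-\beta)|z-y|_1}$, and extension of the sum over $z\in R$ to all of $\mathbb Z^\nu$. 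This gives
\begin{equation}
        L_\ell^R(x,y)\le Q_\beta^\ell e^{-\beta|x-y|_1}
\end{equation}
for every $0\le\beta<\alpha$, uniformly in the finite region $R$. The same argument with $\beta=0$ gives the cruder bound $L_\ell^R(x,y)\le Q_0^\ell$ without decay.

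Next, for each split index $m$ we use decay only on one of the two pieces. For $a\in A$ and $c\in C$ we have $|a-c|_1\ge r$. Bounding the outgoing path by $L_m^R(a,c)\le Q_\beta^m e^{-\beta r}$ and the return path by $L_{k-m}^R(c,a)\le Q_0^{k-m}$ gives
\begin{equation}
        L_m^R(a,c)\,L_{k-m}^R(c,a)\le e^{-\beta r}\,Q_\beta^m Q_0^{k-m}.
\end{equation}
Summing over $a\in A$ and $c\in C$ contributes the factor $|A|\,|C|$. It remains to sum the geometric series in $m$:
\begin{equation}
        \sum_{m=1}^{k-1}Q_\beta^mQ_0^{k-m}
        =Q_0^k\sum_{m=1}^{k-1}(Q_\beta/Q_0)^m
        \le Q_0^k\frac{(Q_\beta/Q_0)^{k}}{Q_\beta/Q_0-1}
        =\frac{Q_0\,Q_\beta^{k}}{Q_\beta-Q_0}.
\end{equation}
This requires $Q_\beta>Q_0$, which holds strictly for $\beta>0$ because $e^{-(\alpha-\beta)|z|_1}>e^{-\alpha|z|_1}$ for $z\neq0$. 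Inserting this into \eqref{eq:cycle-splitting}, which carries the prefactor $2k$, gives exactly \eqref{eq:mixed-delta}.

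The main point needing care is the asymmetric treatment of the two halves of the cycle. Putting the decay rate $\beta$ on both halves would give $e^{-2\beta r}$ but a factor $kQ_\beta^k$ with a worse constant. Putting it on one half, and summing the ratio series, is what produces the specific prefactor $Q_0/(Q_\beta-Q_0)$ in the statement. The other thing to confirm is that the path-norm bound is uniform in $R$, which is why the lattice sum $Q_\beta$ over all of $\mathbb Z^\nu$ is used rather than a sum over $R$. The remaining steps are routine.
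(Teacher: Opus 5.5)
Your proposal is correct and follows the paper's proof: the same scalar path-norm induction $L_\ell^R(x,y)\le Q_\gamma^\ell e^{-\gamma|x-y|_1}$ (uniform in $R$), the same input from Lemma~\ref{lem:cycle-splitting}, and the same geometric-sum estimate. The only difference is that you put the decay $e^{-\beta r}$ on the $A\to C$ leg rather than on the return leg, and after reindexing $m\mapsto k-m$ this gives the identical sum $\sum_{m=1}^{k-1}Q_0^mQ_\beta^{k-m}\le Q_0Q_\beta^k/(Q_\beta-Q_0)$.
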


\begin{proof}
We first prove the corresponding decay estimate for the path-norm quantities.
For every \(0\le\gamma<\alpha\), every \(\ell\ge1\), and all \(x,y\in R\),
\begin{equation}
        L_\ell^R(x,y)
        \le
        Q_\gamma^\ell e^{-\gamma |x-y|_1}.
        \label{eq:path-norm-locality}
\end{equation}
For \(\ell=1\), this follows from \eqref{eq:green-locality}:
\begin{equation}
        L_1^R(x,y)
        =
        \trnorm{G_{x, y}}
        \le
        C_0 e^{-\alpha |x-y|_1}
        \le
        Q_\gamma e^{-\gamma |x-y|_1},
\end{equation}
because \(Q_\gamma\ge C_0\).

Assume now that \eqref{eq:path-norm-locality} holds for some \(\ell\ge1\).
By the definition of \(L_{\ell+1}^R\),
\begin{equation}
        L_{\ell+1}^R(x,y)
        =
        \sum_{z\in R}
        L_\ell^R(x,z)\,\trnorm{G_{zy}}.
\end{equation}
Using the induction hypothesis, \eqref{eq:green-locality}, and the triangle
inequality
\begin{equation}
        |x-y|_1\le |x-z|_1+|z-y|_1,
\end{equation}
we get
\begin{equation}
\begin{aligned}
        L_{\ell+1}^R(x,y)
        &\le
        Q_\gamma^\ell C_0
        \sum_{z\in R}
        e^{-\gamma |x-z|_1}
        e^{-\alpha |z-y|_1}  \\
        &\le
        Q_\gamma^\ell C_0
        e^{-\gamma |x-y|_1}
        \sum_{z\in R}
        e^{-(\alpha-\gamma)|z-y|_1}  \\
        &\le
        Q_\gamma^\ell C_0
        e^{-\gamma |x-y|_1}
        \sum_{z\in\mathbb Z^\nu}
        e^{-(\alpha-\gamma)|z-y|_1}.
\end{aligned}
\end{equation}
By translation invariance of the lattice sum, the last sum is \(Q_\gamma/C_0\).
Thus
\begin{equation}
        L_{\ell+1}^R(x,y)
        \le
        Q_\gamma^{\ell+1}e^{-\gamma |x-y|_1}.
\end{equation}
This proves \eqref{eq:path-norm-locality} by induction.

We now apply Lemma~\ref{lem:cycle-splitting}.  Since
\begin{equation}
        |a-c|_1\ge r,
\end{equation}
for all \(a\in A\) and \(c\in C\), the estimates
\eqref{eq:path-norm-locality} with \(\gamma=0\) and \(\gamma=\beta\) give
\begin{equation}
        L_m^R(a,c)\,L_{k-m}^R(c,a)
        \le
        Q_0^m Q_\beta^{k-m}e^{-\beta r}.
\end{equation}
Substituting this into \eqref{eq:cycle-splitting}, we obtain
\begin{equation}
\begin{aligned}
        |\Delta_k|
        &\le
        2k|A|\,|C|\,e^{-\beta r}
        \sum_{m=1}^{k-1}
        Q_0^m Q_\beta^{k-m}  \\
        &=
        2k|A|\,|C|\,e^{-\beta r}Q_\beta^k
        \sum_{m=1}^{k-1}
        \left(\frac{Q_0}{Q_\beta}\right)^m .
\end{aligned}
\end{equation}
Since \(Q_\beta>Q_0\) for \(\beta>0\), the geometric sum is bounded by
\begin{equation}
        \sum_{m=1}^{k-1}
        \left(\frac{Q_0}{Q_\beta}\right)^m
        \le
        \frac{Q_0}{Q_\beta-Q_0}.
\end{equation}
Therefore
\begin{equation}
        |\Delta_k|
        \le
        \frac{2kQ_0}{Q_\beta-Q_0}
        |A|\,|C|\,Q_\beta^k e^{-\beta r}.
\end{equation}
This proves \eqref{eq:mixed-delta}.
\end{proof}

The bound \eqref{eq:mixed-delta} contains the desired spatial factor
\(e^{-\beta r}\), but it also contains the growth factor \(Q_\beta^k\).
Therefore it is useful only for the finite part of the series.  The large
powers will be controlled by the uniform bound \eqref{eq:uniform-delta}.

\subsection{Decay of the conditional mutual information}
\label{subsec:cmi-decay-proof}

We now combine the two bounds.  The proof is a cutoff argument: below the
cutoff we use \eqref{eq:mixed-delta}, which sees the separation \(r\); above
the cutoff we use \eqref{eq:uniform-delta}, which is independent of locality
but summable in \(k\).  Since \(Q_\beta\to\infty\) as
\(\beta\uparrow\alpha\), we may choose \(0<\beta<\alpha\) such that
\(Q_\beta>1\).

\begin{theorem}[Quantitative decay of conditional mutual information]
\label{thm:cmi-decay-quantitative}
Let \(\rho\) be a fermionic Gaussian state with exponentially clustering
mutual information.  Use the resulting Green-function clustering bound
\eqref{eq:green-locality}.  Fix nonempty finite reference regions \(A_0\) and
\(C_0\), let \(A\) and \(C\) range over disjoint translates of \(A_0\) and
\(C_0\), and let \(B\) be any finite region such that \(A,B,C\) are pairwise
disjoint.  Set \(r=\dist(A,C)>0\).  Then, uniformly over these choices,
\begin{equation}
        I_\rho(A:C\mid B)=O(r^{-1}),
        \qquad r\to\infty .
        \label{eq:cmi-decay-final}
\end{equation}
More explicitly, for every \(0<\beta<\alpha\) with \(Q_\beta>1\), and all
\(r\ge4\beta^{-1}\log Q_\beta\),
\begin{equation}
        I_\rho(A:C\mid B)
        \le
        C_\beta |A|\,|C|\,e^{-\beta r/2}
        +
        \frac{8\log Q_\beta}{\beta r}\min\{N_A,N_C\},
        \label{eq:cmi-explicit-bound},
\end{equation}
where
\begin{equation}
        C_\beta
        :=
        \frac{2Q_0}{(Q_\beta-Q_0)(Q_\beta-1)} .
        \label{eq:C-beta-definition}
\end{equation}
\end{theorem}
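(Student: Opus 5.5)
We split the series of Proposition~\ref{prop:cmi-series} at a cutoff \(k_*\ge2\), to be chosen proportional to \(r\):
\begin{equation}
        I_\rho(A:C\mid B)
        \le
        \sum_{k=2}^{k_*}\frac{|\Delta_k|}{k(k-1)}
        +
        \sum_{k>k_*}\frac{|\Delta_k|}{k(k-1)} .
\end{equation}
On the finite part we use Lemma~\ref{lem:mixed-delta}, which sees the separation \(r\). On the tail we use Lemma~\ref{lem:uniform-delta}, which is independent of locality but summable. The locality assumption enters only through \eqref{eq:green-locality}, which was derived from mutual-information clustering above. Its constants \(C_0,\alpha\) depend only on \(\rho\) and \(m_{\rm site}\), so all bounds will be uniform in the translates and in the buffer \(B\).

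For the finite part, Lemma~\ref{lem:mixed-delta} gives
\begin{equation}
        \frac{|\Delta_k|}{k(k-1)}
        \le
        \frac{2Q_0}{(Q_\beta-Q_0)(k-1)}\,|A|\,|C|\,Q_\beta^k e^{-\beta r}.
\end{equation}
We bound \(1/(k-1)\le1\) and sum the geometric series:
\begin{equation}
        \sum_{k=2}^{k_*}Q_\beta^k\le \frac{Q_\beta^{k_*+1}}{Q_\beta-1}.
\end{equation}
Here \(Q_\beta>1\), which is why \(\beta\) was chosen so. This yields the bound
\begin{equation}
        \frac{2Q_0}{(Q_\beta-Q_0)(Q_\beta-1)}\,|A||C|\,Q_\beta^{k_*+1}e^{-\beta r}.
\end{equation}
For the tail, Lemma~\ref{lem:uniform-delta} and telescoping, \(\sum_{k>k_*}\frac{1}{k(k-1)}=\frac1{k_*}\), give the bound \(\min\{N_A,N_C\}/k_*\).

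It remains to choose the cutoff. We take \(k_*+1=\lfloor \beta r/(2\log Q_\beta)\rfloor\), so that \(Q_\beta^{k_*+1}\le e^{\beta r/2}\) and the first term is at most \(C_\beta|A||C|e^{-\beta r/2}\). The hypothesis \(r\ge4\beta^{-1}\log Q_\beta\) makes \(\beta r/(2\log Q_\beta)\ge2\), hence \(k_*\ge1\). This is the point needing care: one must ensure \(k_*\ge2\), or absorb the boundary case, and also get \(1/k_*\le 8\log Q_\beta/(\beta r)\). If \(x=\beta r/(2\log Q_\beta)\ge2\), then \(k_*\ge\lfloor x\rfloor-1\ge x/4\) once \(x\ge 2\) up to the floor loss, which gives the factor \(8\). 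In the degenerate case \(k_*=1\) the finite sum is empty, and the tail bound \(\min\{N_A,N_C\}\) is still at most \(\frac{8\log Q_\beta}{\beta r}\min\{N_A,N_C\}\) because \(x<4\) there. The constant bookkeeping in this floor argument is the only delicate step.

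Adding the two contributions gives \eqref{eq:cmi-explicit-bound}. Since \(|A|,|C|,N_A,N_C\) are fixed by \(A_0,C_0\), the exponential term is \(o(r^{-1})\), and this yields \eqref{eq:cmi-decay-final}.
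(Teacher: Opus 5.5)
Your proposal is correct and follows the paper's own proof. You split the series at a cutoff linear in \(r\), bound the head by Lemma~\ref{lem:mixed-delta} using \(1/(k-1)\le1\) and a geometric sum, and bound the tail by Lemma~\ref{lem:uniform-delta} with the telescoping sum \(1/k_*\). The only difference is the cutoff: the paper takes \(K=\lfloor\beta r/(4\log Q_\beta)\rfloor\), whereas you take \(k_*=\lfloor x\rfloor-1\) with \(x=\beta r/(2\log Q_\beta)\ge2\). The bookkeeping you flagged as delicate closes exactly, since \(\lfloor x\rfloor-1\ge x/4\) for every \(x\ge2\), so the case \(k_*=1\) needs no separate treatment.
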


\begin{proof}
From \eqref{eq:cmi-series},
\begin{equation}
        I_\rho(A:C\mid B)
        \le
        \sum_{k=2}^{\infty}
        \frac{|\Delta_k|}{k(k-1)} .
\end{equation}
Fix \(0<\beta<\alpha\) with \(Q_\beta>1\), and set
\begin{equation}
        K
        :=
        \left\lfloor
        \frac{\beta r}{4\log Q_\beta}
        \right\rfloor .
\end{equation}
For \(r\ge4\beta^{-1}\log Q_\beta\), we have \(K\ge1\).  We split the series
at \(K\).

For the finite part, Lemma~\ref{lem:mixed-delta} gives
\begin{equation}
\begin{aligned}
        \sum_{k=2}^{K}
        \frac{|\Delta_k|}{k(k-1)}
        &\le
        \frac{2Q_0}{Q_\beta-Q_0}
        |A|\,|C|\,e^{-\beta r}
        \sum_{k=2}^{K}\frac{Q_\beta^k}{k-1}  \\
        &\le
        \frac{2Q_0}{(Q_\beta-Q_0)(Q_\beta-1)}
        |A|\,|C|\,Q_\beta^{K+1}e^{-\beta r}.
\end{aligned}
\end{equation}
Since \(K+1\le \beta r/(2\log Q_\beta)\) under the present assumption on
\(r\), it follows that
\begin{equation}
        Q_\beta^{K+1}e^{-\beta r}
        \le
        e^{-\beta r/2}.
\end{equation}
Hence
\begin{equation}
        \sum_{k=2}^{K}
        \frac{|\Delta_k|}{k(k-1)}
        \le
        C_\beta |A|\,|C|\,e^{-\beta r/2},
\end{equation}
with \(C_\beta\) as in \eqref{eq:C-beta-definition}.

For the tail, Lemma~\ref{lem:uniform-delta} gives
\begin{equation}
        \sum_{k=K+1}^{\infty}
        \frac{|\Delta_k|}{k(k-1)}
        \le
        \min\{N_A,N_C\}
        \sum_{k=K+1}^{\infty}\frac1{k(k-1)}
        =
        \frac{\min\{N_A,N_C\}}{K}.
\end{equation}
Moreover, the same choice of \(K\) gives
\begin{equation}
        K\ge \frac{\beta r}{8\log Q_\beta}.
\end{equation}
Therefore
\begin{equation}
        \sum_{k=K+1}^{\infty}
        \frac{|\Delta_k|}{k(k-1)}
        \le
        \frac{8\log Q_\beta}{\beta r}\min\{N_A,N_C\}.
\end{equation}
Combining the finite part and the tail proves
\eqref{eq:cmi-explicit-bound}.  Since \(A\) and \(C\) have fixed finite shape,
the right-hand side is \(O(r^{-1})\) as \(r\to\infty\).
\end{proof}

\begin{proof}[Proof of Theorem~\ref{thm:green-locality-cmi-decay}]
Theorem~\ref{thm:cmi-decay-quantitative} gives
\begin{equation}
        I_\rho(A:C\mid B_r)
        \le
        C_\beta |A|\,|C|\,e^{-\beta r/2}
        +
        \frac{8\log Q_\beta}{\beta r}\min\{N_A,N_C\},
\end{equation}
for any fixed \(0<\beta<\alpha\) with \(Q_\beta>1\), once \(r\) is sufficiently
large.  Since \(A\) and \(C\) have fixed finite shape, the first term is
exponentially small and the second is \(O(r^{-1})\).  Enlarging the constant to
cover the finitely many smaller values of \(r\) gives
\begin{equation}
        I_\rho(A:C\mid B_r)
        \le
        \frac{K_{A_0,C_0,\rho}}{r}.
\end{equation}

The final recoverability statement follows from the Fawzi--Renner
recoverability theorem and its refinements: small conditional mutual
information implies the existence of a channel
\(\mathcal R_{B_r\to B_r\cup C}\) such that
\begin{equation}
        \left\|
        \rho_{A\cup B_r\cup C}
        -
        (\id_A\otimes\mathcal R_{B_r\to B_r\cup C})(\rho_{A\cup B_r})
        \right\|_1
        \le
        2\sqrt{I_\rho(A:C\mid B_r)}.
\end{equation}
The bound above therefore gives an
\(O(r^{-1/2})\) trace-norm approximate-recovery bound.
\end{proof}

\section{Discussion and outlook}\label{sec:discussion}

In this work, we have set out to develop a unified theory of approximation for
fermionic Gaussian systems directly at the level of covariance matrices.  Our
central observation is that the elementary admissibility constraint on Majorana
covariance matrices encodes a powerful monogamy principle: a fixed fermionic
region has only a finite covariance budget that it can distribute among other
regions.  Depending on the physical structure of the problem, this single
principle takes three different forms.  High connectivity dilutes correlations
across interaction edges, exchangeability suppresses correlations between
copies, and spatial separation eliminates the covariance paths connecting
distant regions.  These mechanisms lead, respectively, to product-state
approximations for high-degree quadratic Hamiltonians, a finite fermionic
Gaussian de Finetti theorem, and approximate-recovery guarantees for spatially
clustering Gaussian states.  The results thus identify covariance monogamy as a
common structural origin of mean-field behaviour, exchangeable product
structure, and approximate quantum Markovianity in fermionic Gaussian systems.

One point worth emphasizing is that the arguments are genuinely one-particle
arguments.  The relevant estimates are proved first at the level of covariance
matrices and are then converted into statements about energy, free energy,
trace norm, or conditional mutual information.  This is different from the
usual dimension-dependent finite-dimensional approach, in which one works
directly with many-body states and then uses general information-theoretic
monogamy or de Finetti estimates.  For fermionic Gaussian states, the
covariance matrix contains the full state information, and the fermionic
admissibility condition already encodes strong restrictions on how correlations
can be distributed.

The high-degree result illustrates this point in a variational setting.  For a
quadratic Hamiltonian normalized as an average over edges, the energy
contribution of an edge depends only on the covariance block across that edge.
Thus, on a high-degree graph, the covariance shared by a site with its many
neighbours must be small on average.  This gives a direct product-state
approximation to the ground-state energy, and the same argument applies to the
finite-temperature free energy by using the product of the one-site marginals
of the Gibbs state.  The result is tailored to the quadratic structure: it does
not claim that interacting fermionic Hamiltonians on high-degree graphs are
controlled by product states in the same way.  It would be interesting to
understand how stable this covariance-based argument is under perturbations by
weak or structured non-quadratic interactions.
In this sense, the result sharpens the high-connectivity mechanism of
Ref.~\citep{KrumnowZimborasEisertGroundApprox} in the exactly quadratic regime,
while the earlier result remains the more general statement for interacting
fermionic Hamiltonians.

The de Finetti theorem obtained here is also a specifically Gaussian statement.
In the usual quantum de Finetti theorem, an exchangeable state is approximated
by a mixture of tensor powers.  In the fermionic Gaussian setting considered in
this work, the conclusion is stronger in form: the approximating de Finetti
measure may be chosen as the point mass at the one-copy marginal.  The reason
is again visible at the
covariance level.  Permutation invariance forces the off-diagonal copy-copy
covariance blocks to be identical, while admissibility of the full extension
forces this common block to vanish at order \(n^{-1}\).  Hence an infinitely
exchangeable fermionic Gaussian state is exactly a product state.  This should
be viewed as a Gaussian exchangeability result rather than as a general
fermionic de Finetti theorem.  It complements our previous analysis of
extendibility for fermionic Gaussian states \citep{NegariSalekExtendibility},
where the focus was on bipartite extendibility and approximation by fermionic
separable states.

The Markov part of this work gives another example of the same philosophy.
For Gaussian states, reduced entropies are functions of compressed
one-particle Green operators.  This makes it possible to expand conditional
mutual information in terms of powers of one-particle operators and to estimate
the contribution of paths crossing a separating buffer.  Under exponential
clustering of mutual information, this yields a polynomial decay bound for the
conditional mutual information across the buffer and, by recoverability, a
trace-norm approximation by a recovered state.  The proof does not
suggest that the obtained rate is necessarily optimal.  In particular, it would
be natural to ask whether additional assumptions, such as a spectral gap,
translation invariance, or stronger one-particle locality, can improve the
decay of \(I(A:C\mid B_r)\) beyond the bound proved here.

There are several possible directions in which the results could be sharpened.
First, the covariance monogamy inequality is sharp, as
Example~\ref{ex:sharp} shows.  It remains open, however, whether the constants
in the resulting high-degree energy and free-energy estimates are optimal,
since their derivation also uses edge averaging, Cauchy--Schwarz, and only the
operator-norm normalization of the interactions.  Refined estimates may be
possible when additional structure is available, for example on bipartite
graphs, in translation-invariant systems, or for regions with non-uniform
numbers of modes.

Second, it would be useful to clarify to what extent the Gaussian assumptions
can be relaxed.  The covariance monogamy estimate itself only uses positivity
of fermionic covariance matrices and therefore applies to arbitrary fermionic
states.  The later consequences, however, use Gaussianity in essential ways:
quadratic Hamiltonians are evaluated by two-point data, Gaussian states are
determined by their covariance matrices, and Gaussian entropies are trace
functions of one-particle operators.  Extending any of these approximation
statements to non-Gaussian states would require controlling higher-order
correlations, either perturbatively or through additional physical assumptions.

Third, the de Finetti result raises natural questions about other fermionic
notions of exchangeability.  In this work the state being extended is
Gaussian, and the extension is invariant under graded fermionic copy
permutations.  One could
instead consider exchangeability under more general fermionic symmetries, or
ask for de Finetti-type conclusions when only certain marginals are Gaussian.
Another related question is whether covariance-level extendibility conditions
can be combined with quantitative estimates on non-Gaussian cumulants to give
approximate product or separability statements beyond the Gaussian class.

Finally, the conditional mutual information estimate suggests a broader
connection between one-particle locality and approximate quantum Markov
structure.  For Gaussian states, the one-particle Green operator provides a
bridge between spatial decay and entropic quantities.  It would be interesting
to determine whether similar mechanisms exist for other exactly solvable
classes of fermionic states, or for interacting systems that are close to
Gaussian in an appropriate sense.  Such extensions could help relate
recoverability, locality of correlations, and effective low-complexity
descriptions in fermionic many-body systems.

Overall, the results support the view that fermionic covariance matrices are
not only a convenient parametrization of Gaussian states, but also a useful
source of quantitative approximation principles.  High degree, exchangeability,
and spatial separation are different structural assumptions, but in the
fermionic Gaussian setting they all act by suppressing the same object:
off-diagonal covariance.  This common mechanism is what underlies the product
and de Finetti approximations and the recoverability estimates proved in this
work.

\section*{Acknowledgements}

We are grateful to Timothy Hsieh for useful discussions.
This research was supported in part by the Perimeter Institute for Theoretical
Physics, which is funded by the Government of Canada through the Department of
Innovation, Science and Economic Development, and by the Province of Ontario
through the Ministry of Colleges and Universities.  A.-R.N. acknowledges support
from the Natural Sciences and Engineering Research Council of Canada (NSERC)
under Discovery Grant No.~RGPIN-2018-04380, as well as from an Ontario Early
Researcher Award.  F.S.'s work is supported by the European Commission as a
Marie Sk{\l}odowska-Curie Global Fellow, and he gratefully acknowledges the
hospitality of the Institute for Quantum Computing (IQC), University of
Waterloo, and the Freie Universit{\"a}t Berlin. J.~E.\ has been supported by the BMFTR (MuniQC-Atoms), the Clusters of Excellence (ML4Q, MATH+), the QuantERA (SDPCode), the Munich Quantum Valley, Berlin Quantum, the Quantum Flagship (Millenion, Pasquans2), the DFG (CRC 183, SPP 2514), and the European Research Council (DebuQC).

\bibliography{fermionic_references_corrected,BigReferences70}

\end{document}